\let\@makecaption=\caption@makecaption
\newtheorem{theorem}{Theorem}
\newtheorem{lemma}{Lemma}
\newcommand{\be}{\begin{equation}}
\newcommand{\ee}{\end{equation}}
\newcommand{\ben}{\begin{eqnarray}}
\newcommand{\een}{\end{eqnarray}}
\newcommand{\bes}{\begin{subequations}}
\newcommand{\ees}{\end{subequations}}
\newcommand{\bF}{\begin{figure}}
\newcommand{\eF}{\end{figure}}
\def\ket#1{| #1 \rangle}
\def\bra#1{\langle #1 |}
\begin{document}

\title{Fault-tolerant quantum metrology}

\author{Theodoros Kapourniotis}

\affiliation{Department of Physics, University of Warwick, Coventry CV4 7AL, United Kingdom}

\author{Animesh Datta}
\affiliation{Department of Physics, University of Warwick, Coventry CV4 7AL, United Kingdom}
	
\date{\today}

\begin{abstract}
We introduce the notion of fault-tolerant quantum metrology to overcome noise beyond our control -- associated with sensing the parameter, by reducing the noise in operations under our control -- associated with preparing and measuring probes and ancillae. 
To that end, we introduce noise thresholds to quantify the noise resilience of parameter estimation schemes.
We demonstrate improved noise thresholds over the non-fault-tolerant schemes.
We use quantum Reed-Muller codes to retrieve more information about a single phase parameter being estimated in the presence of full-rank Pauli noise. Using only error detection, as opposed to error correction, allows us to retrieve higher thresholds. We show that better devices, which can be engineered, can enable us to counter larger noise in the field beyond our control.
Further improvements in fault-tolerant quantum metrology could be achieved by optimising in tandem parameter-specific estimation schemes and transversal quantum error correcting codes.

\end{abstract}

% insert suggested PACS numbers in braces on next line
\pacs{}
% insert suggested keywords - APS authors don't need to do this
%\keywords{}

%\maketitle must follow title, authors, abstract, \pacs, and \keywords
\maketitle

\section{Introduction}
 Like all quantum information processing tasks, noise has an adverse effect on quantum enhancements in precision metrology. 
Early promises of a quantum-enhanced `Heisenberg scaling' are now tempered by its elusiveness even in the presence of arbitrarily small noise in the sensing process~\cite{demkowicz2012elusive,Smirne2016}. 
After some early musings~\cite{Preskill2000,Macchiavello2000}, much effort has been directed towards recovering the `Heisenberg scaling' using quantum error correction~\cite{Ozeri2013,dur2014improved,Kessler2014,Arrad2014,HerreraMarti2015,Lu2015,Unden2016}, 
More recent results suggest the impossibility of recovering the `Heisenberg scaling' in the presence of general Markovian noise if the Hamiltonian lies in the span of the noise operators, even after quantum error correction~\cite{Demkowicz2017,zhou2017achieving}. 
Studies of error-corrected quantum metrology have either focussed on specific experimental systems~\cite{HerreraMarti2015,Unden2016,Reiter2017,Layden17} or specific forms of noise affecting the field~\cite{dur2014improved,HerreraMarti2015}. Others have assumed instantaneous and perfect correction and control operations~\cite{Arrad2014,HerreraMarti2015,Demkowicz2017,zhou2017achieving} or short sensing times to commute noise to the end of the protocol~\cite{dur2014improved,Kessler2014,Lu2015}. These assumptions are unlikely to hold in general.
%More importantly, the theoretical works provide lower bounds to the precision of estimation instead of actual precision.

In this article, we take a complementary approach by initiating the study of fault-tolerant (FT) quantum metrology. 
Instead of lower bounds and asymptotic scalings, we focus on the estimation of a phase parameter $\phi$ associated with the field
\be
R_z(\phi) = \exp \left(-i \frac{\phi}{2}Z\right),
\label{eq:Rz}
\ee
up to a fixed number of bits, where $Z=\ket{0}\bra{0}-\ket{1}\bra{1}$. 
We show that $\phi$ can be estimated to more bits of precision with our FT quantum metrology protocol, in the presence of noise, than without it.
This is achieved by introducing the concept of \emph{thresholds} to noisy quantum metrology, providing experimentalists with quantitative targets to aim for.
Our illustration uses a specific phase estimation scheme and code switching between Steane and other quantum Reed-Muller codes (QRMCs) to counter locally bounded full-rank noise \emph{beyond} our control -- associated with the parameter or field being sensed, as well as \emph{under} our control -- in preparing, entangling, and measuring probes and ancilla. We call the latter `devices'. We do not assume 
short sensing times or
 perfect control operations. 
We show in Fig.~(\ref{fig:threshold_relations}) that better devices, which can be engineered, can enable us to counter more noise in the field beyond our control. 
Our results for fault tolerant quantum metrology can also be extended to other sensing and estimating applications, such as clock synchronisation~\cite{de2005quantum}, and systematic error estimation and calibration~\cite{kimmel2015robust}.

In contrast to previous approaches of error-corrected quantum metrology such as~\cite{dur2014improved,Demkowicz2017,zhou2017achieving,Layden2019}, as well as
ancilla-assissted quantum metrology schemes~\cite{Demkowicz2014} our FT quantum metrology framework enables a meaningful quantitive analysis of the noise in devices in addition to that in the field.
Recast in terms of noise thresholds, these prior works on quantum metrology with prefect error-correction corresponds to the blue solid line in Fig.~(\ref{fig:threshold_relations}). The blue dashed line shows the depreciating performance of an error-corrected quantum metrology scheme due to noisy devices. Our main result is the green line in Fig.~(\ref{fig:threshold_relations}) that shows the possibility of improvement using fault tolerant quantum metrology.

This paper is organised as follows. In Sec.~(\ref{sec:ftmet}), we introduce the notion of fault tolerance in quantum metrology, comparing and contrasting it with the more familiar notion of fault tolerance in quantum computing.  Sec.~(\ref{Sec:res}) presents out main results, culminating in Fig.~(\ref{fig:threshold_relations}). The subsequent sections provide the technical details and proofs. Sec.~(\ref{AppendixRG}) provides formal convergence, noise resilience and resource analysis of a modified estimation scheme~\cite{rudolph2003quantum}. Sec.~(\ref{Appendix_disc}) calculates the effect of applying logical $R_z(\phi)$, which is non-trasversal for QRMCs in general. Sec.~(\ref{Appendix_thres}) calculates failure probabilities of error detection when devices are perfect. Sec.~(\ref{Appendix_noisydev}) analyses the performance of the protocols when devices are not perfect. Finally, Sec.~(\ref{appendix_parallel}) presents the parallel version of our protocols and Sec.~(\ref{sec:disc}) discusses prospects and open questions in fault tolerant quantum metrology.

\section{Fault-tolerance and metrology}
\label{sec:ftmet}

We treat phase estimation as a quantum circuit, composed of the probe and ancilla state preparations and measurements as well as the application of $R_z(\phi)$ gate. The central difference between FT quantum metrology and computing is that $\phi$ is unknown in the former while it is known in the latter. The only way to apply $R_z(\phi)$ is by interrogating the field. %Alternatively, one can think of $R_z(\phi)$ as a black box.

Quantum information can be protected against bounded noise by using a quantum error correcting code (QECC). In order to protect it while it dynamically undergoes computation one can apply the procedures of fault tolerance.  Fault tolerance encompasses a set of procedures for preparing encoded states, applying encoded gates, and measuring encoded states.  If $\phi$ is known, as is the case in computing, a fault tolerant encoding of $R_z(\phi)$ in Eqn.~(\ref{eq:Rz}) can be accomplished. This relies on the existence of a fault tolerant set of gates from which to build a fault tolerant circuit. The main property of these procedures is that an error in one component in a FT encoding results in no more than one error in the entire encoded block~\cite{nielsen2010quantum}. As $\phi$ is unknown in metrology, we cannot undertake its fault tolerant encoding directly. Fault tolerant quantum metrology thus operates by performing fault tolerance before and after the field $R_z(\phi)$ is sensed, as in Fig.~(\ref{fig:protocol2}).

\begin{figure}[t]	
	\mbox{
\Qcircuit  @C=0.8em @R=1em {
		 &   \qw	& \gate{U}		& \qw	  	 \\
		 &   \qw	& \gate{U}		& \qw	  	 \\
	 \lstick{\raisebox{2.6em}{$\ket{\psi}_L$\ }}	 &  & 	\cdots	& 	  	 \\
			&   \qw	& \gate{U}		& \qw	 \inputgroupv{1}{4}{0.8em}{2.6em}{}
			  }
			}
		\caption{Single-qubit state $\ket{\psi}$ is encoded into multi-qubit state $\ket{\psi}_L$ in order to detect/correct errors on few qubits. Transversal application of unitary gate $U$ means bitwise application of $U$ on physical qubits of $\ket{\psi}_L$. It outputs the encoded state of state $U \ket{\psi}$.
		} \label{transversality}
\end{figure}
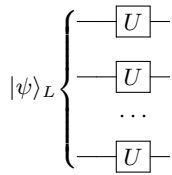

A desirable design principle in fault tolerance is limiting the proliferation of noise from one part of the circuit to another. This is called transversality and is the requirement that each physical gate employed for the encoded gate acts on at most one physical qubit in each code block~\cite{Raussendorf2012}, as shown in Fig.~(\ref{transversality}). 
Since in FT quantum metrology only single qubit gates $R_z(\phi)$ are applied during the interrogation of the field, as shown in Fig.~(\ref{fig:protocols}), transversality comes naturally. 
It results in errors on a single physical qubit not propagating to more physical qubits of the same block in a single fault-tolerant gate procedure. 

If we restrict ourselves to the well-studied family of stabilizer codes, we cannot hope for a code transversal for $R_z(\phi), \forall \phi \in [0,2\pi]$. This is because for stabilizer codes all transversal gates reside at a finite level of the Clifford hierarchy~\cite{jochym2017disjointness} (for details see Appendix~\ref{appendix_ft}). We must therefore move to a digital representation of the phase parameter $\phi=2 \pi \times 0.b_0 b_1 b_2 \ldots=b_0 \pi + b_1 \pi/2 + b_2 \pi/4 + \ldots$ with $b_n \in \{0,1\}$.  Defining $T_{n} \equiv \text{diag}(1, e^{i2\pi /2^{n}}),$ Eq.~(\ref{eq:Rz}) can be re-expressed as $R_z(\phi)= T_1^{b_0} T_2^{b_1} \ldots$. Thus, the field interrogation effectively does or does not apply the gate $T_n$ depending on whether $b_n =0$ or $b_n =1 $ respectively.  For $n$ higher than what our transversal code can support, there is a corruption of the logical subspace. We prove in Sec.~(\ref{Appendix_disc}) that this effect is  bounded and using stabilizer codes can even be beneficial in our construction. 
Since any real-world task must use finite resources, we capture the performance of FT quantum metrology in the number of bits of $\phi$ estimated.
Incidentally, digital quantum metrology has been studied for independent reasons~\cite{Hassani2017}.

Other design principles of fault tolerance quantum computing include gate synthesis/approximation to acquire a FT gate set~\cite{Dawson06}, distillation of so-called magic states~\cite{bravyi2005universal}, and state twirling to diagonalise the noise in the magic state basis~\cite{campbell2010bound}. In the following, we briefly describe why these cannot be applied to FT quantum metrology in their original form and the modifications we resort to. 

Gate synthesis replaces gates that do not belong to the FT set by approximate decompositions of gates of in that set. This cannot be applied in FT quantum metrology since we cannot write a decomposition of the gate $R_z(\phi)= T_1^{b_0} T_2^{b_1} \ldots$ when the bits $b_n$ are unknown. The only way to apply the gate is by interrogating the field. This results in using larger block size QECC for retrieving more bits of the unknown parameter in our FT quantum metrology scheme.

The gates involved in the encoding operations (Hadamard and controlled-NOT) and the field ($R_z(\phi)= T_1^{b_0} T_2^{b_1} \ldots$) form a gate set universal for quantum computing.
The well-studied family of stabilizer codes is known not to be transversal for a universal set of gates~\cite{eastin2009restrictions}. A solution is to inject external states into the logical circuit in order to apply the corresponding gates. Distillation is a series of operations that gives a high fidelity state out of many low fidelity states and is necessary because the external state is noisy in general. It is accompanied by gate teleportation to apply the corresponding gate at any stage of the circuit, as shown in Fig.~(\ref{teleport1}). This circuit cannot be implemented in our FT quantum metrology scheme, once again because of the unknown $\phi$-dependent correction operator in the teleportation step. An alternative solution, which avoids teleportation, is code switching. It switches between codes that are transversal for different subsets of gates. This solution we use (Sec.~(\ref{Appendix_why})), which has implications on the noise threshold of our scheme.

\begin{figure}[t!]
\mbox{
\Qcircuit  @C=1.6em @R=1.0em {
		& \lstick{\ket{\psi}}& \qw	& \targ		& \qw		& \measureD{Z}		& \cw		& \control \cw	 \\
			& \lstick{\ket{+}}& \gate{R_z(\phi)}		&\ctrl{-1}	& \qw		& \qw		& \qw &  \gate{R_z(2\phi) X}  \cwx  \gategroup{2}{1}{2}{3}{.7em}{--} }
			}
\caption{Gate teleportation: All operations outside the dashed box are protected by a code transversal for $\{cX,H,Z\}$. The unitary correction depends on parameter $\phi$ and since it is not transversal for the code requires an extra round of distillation.}
\label{teleport1}
\end{figure}
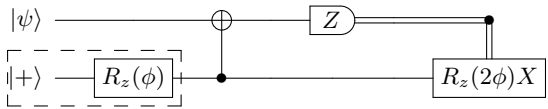

State twirling is the application of a randomising operation that diagonalises the noise on a state in a basis that is defined by the state. It reduces the types of logical noise that need to be treated in a FT circuit. In our FT quantum metrology scheme, this also cannot be applied because after the first interrogation the state depends on the unknown parameter $\phi$. FT quantum metrology thus needs to treat full rank noise in its entirety.

The culmination of a fault tolerant approach is the threshold. The noise threshold for FT quantum metrology we define as the strength of noise~\footnote{Noise strength is typically defined as the diamond norm of the noise operators} below which the estimator for the parameter converges.
It depends on the type of noise, the estimation scheme, and the QECC used. 
Indeed, our FT metrology scheme provides two thresholds - one for the noise in the field which contains the parameter, and another for the devices that perform the preparation, encoding, and measurements of the probes and ancilla. 
One of our main results as shown in Fig.~(\ref{fig:threshold_relations}) is that if the noise in the devices in below a certain threshold, then the threshold for the noise in the field is larger.

In FT quantum computing using gate distillation, a base code transversal only for Clifford gates with high noise threshold such as the surface code~\cite{raussendorf2007topological} can be used. Furthermore, the distillation is based on error detection rather than error correction which contributes to a higher threshold. 
In FT quantum metrology where we have to use code switching, estimating bit $b_n$ requires us to employ a code that is transversal for $T_n.$
In this work, we use Quantum Reed-Muller codes (QRMCs). 
The search for QECCs with better performance, in terms of codelengths and thresholds should be one of the central aims of improving FT quantum metrology in future works.

\begin{figure*}
	\begin{subfigure}[t]{2.2in}
\includegraphics[scale=0.7]{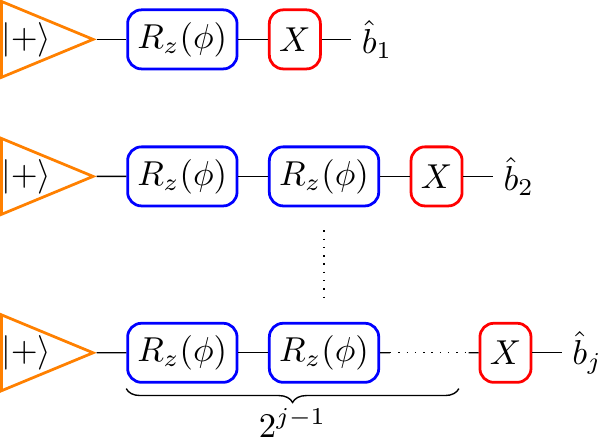}
		\caption{Protocol Ia}\label{fig:protocol1}		
	\end{subfigure}
%	\quad
	\begin{subfigure}[t]{2.2in}
		\includegraphics[scale=0.7]{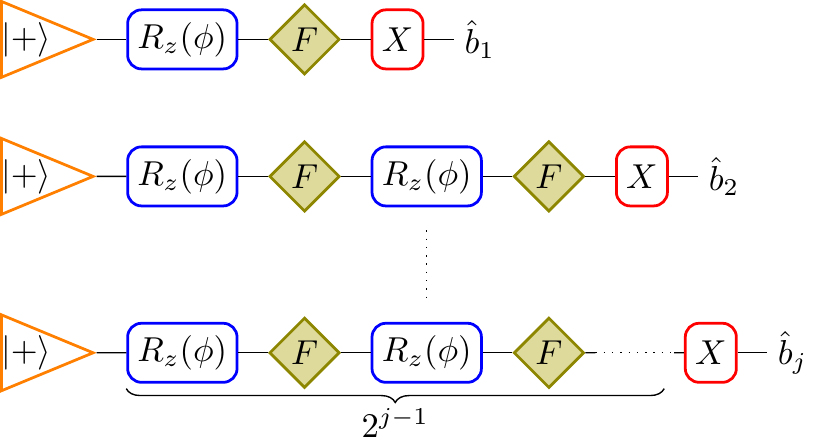}
		\caption{Protocol Ib}\label{fig:protocol2}
	\end{subfigure}
	\quad \quad
	\begin{subfigure}[t]{2.2in}
		\includegraphics[scale=0.7]{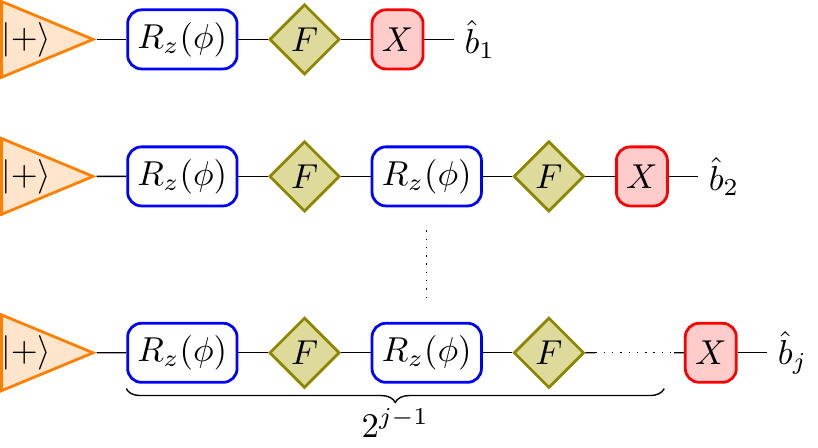}
		\caption{Protocol Ic}\label{fig:protocol3}
	\end{subfigure}
	\caption{Three serial quantum metrology protocols for estimating $j$ bits of the phase $\phi$. Blue boxes denote the field to be sensed, with its allied noise beyond our control. Orange triangles are inputs and red boxes are measurements, both under our control. The protocols start with the state $\ket{+}= (\ket{0}+\ket{1})/\sqrt{2}$ probes. Green rhombuses denote fault tolerance interleaved with sensing the field. Filled shapes denote FT implementations.  
}\label{fig:protocols}
\end{figure*}

\subsection{Encoding}

Quantum Reed-Muller codes (QRMCs) are quantum stabiliser codes constructed from classical Reed-Muller codes RM($r,m$). RM($r,m$) have order $r$ and block length $2^m$ for $0 \leq r \leq m$~\cite{macwilliams1977theory}. 
The QRMC QRM$(1,m)$ has a block size of $2^m-1$ qubits, encodes one qubit and has minimum distance of $3$. 
Transversality of QRMCs enables a logical operation on a logical state by applying transversal gates on the $2^{m}-1$ physical qubits. 
We choose RM$(1,n+1)$ as the basis for the QRMCs, which are transversal for $T_j$, $j \leq n$~\footnote{This choice of $r$ is made on explicit calculations, see Appendix~\ref{appendix_ft}.}. However, these QRMCs are not transversal for $T_j$ for $j>n$. The effect of these post-transversal rotations is subtle and needs to be counteracted in FT metrology.
 Applying $T_n$ transversally on QRM($1,n+1)$ applies the logical $T_n^{\dagger}$ gate. %Furthermore, QRMCs can only operate as error detecting code in the presence of transversal $T_n$ gates and full-rank noise.  

\section{Results}
\label{Sec:res}

Quantum phase estimation can be performed in series. It can also be performed in parallel where multiple qubits in an entangled state interrogate the field simultaneously. They perform similarly to serial strategies where a single qubit interrogates the field multiple times coherently.  See Sec.~(\ref{appendix_parallel}).

We introduce fault tolerance into quantum metrology in three stages. 
The first, Protocol Ia (Fig.~(\ref{fig:protocol1})), is affected by noise everywhere but uses no fault tolerance. It serves as our benchmark.
The second, Protocol Ib (Fig.~(\ref{fig:protocol2})), comes in two types - with noiseless and noisy devices but uses fault tolerance to counteract noise in the field only.
The third,  Protocol Ic (Fig.~(\ref{fig:protocol3})), is affected by noise in both the field and devices and uses fault tolerance to counteract them both.
These protocols can be applied to any phase estimation scheme. 
Since different phase estimation schemes perform differently under noise, their FT threshold improvements and resource requirements will be different.

We illustrate the above methodology for a phase estimation scheme due to Rudolph and Grover (RG)~\cite{rudolph2003quantum}, which we choose for its operational simplicity. 
The RG protocol performs bitwise phase estimation, is non-adaptive, and requires only a Pauli $X$ measurement. The original RG protocol cannot estimate all possible phases -- it has an excluded region~\footnote{A different protocol using a mixed radix representation of the phase can avoid the excluded regions~\cite{ji2008parameter}, but its use in our FT methodology (See Appendix \ref{appendix_ji}) is left open for want of a code family that is simultaneously transversal for $\text{diag}(1, e^{i2\pi /2^{n}})$ and $\text{diag}(1, e^{i2\pi /3^{n}}).$} captured by a parameter $\gamma.$ 	We now present our main results.

\textit{No fault tolerance:}  For any bit $b_j,$ we denote its estimate as $\widehat{b}_j.$ 
The RG scheme assumes $0 \leq \phi < \pi,$ whence $\widehat{b}_0=0.$ We use it to estimate the unknown phase $\phi$ to $t$ bits.
This phase estimation protocol labelled Protocol Ia is presented below and depicted in Fig.~(\ref{fig:protocol1}). 
The protocol converges if it outputs the first $t$ bits of $\phi$ with confidence $\epsilon$.

\hrulefill

\textbf{Protocol Ia}

For $j=1, \ldots, t$

\begin{enumerate}

\item Repeat $M$ times:\\
 (i) Prepare $\ket{+}$. \\
(ii) Interrogate field $2^{j-1}$ times.\\
(iii) Measure $X.$

\item Calculate $\widehat{p}_j$ as the fraction of the $+1$ measurement outcomes out of $M$.  If $\widehat{b}_{j-1}=0$ set $\widehat{\phi}_j = \cos^{-1} (2 \widehat{p}_j -1)$ in $[0,\pi]$ , or else in $[\pi,2\pi]$. If

 (i) $\widehat{b}_{j-1} \pi \leq \widehat{\phi}_j <  \widehat{b}_{j-1} \pi + (\pi/2 - \gamma) $, set $\widehat{b}_j=0$.
 
  (ii) $\widehat{b}_{j-1} \pi + (\pi/2 + \gamma) \leq \widehat{\phi}_j \leq \widehat{b}_{j-1} \pi +  \pi $, set $\widehat{b}_j=1$.
  
Otherwise output estimate up to bit $j-1$ and exit.

\item If $j \neq t$ increase $j$ by one and go to step $1$, otherwise exit and output $$\widehat{\phi}=\widehat{b}_1 \frac{\pi}{2} + \widehat{b}_2 \frac{\pi}{4} + \ldots +  \widehat{b}_t \frac{\pi}{2^t} $$
\end{enumerate}

\vskip-0.17in

\hrulefill

In the noiseless case, this protocol converges everywhere except for $\phi$ in between the decision boundaries -- called the excluded region, which depend on $\gamma$ (Fig.~(\ref{fig:estimator1})). In the latter case, we abort the protocol.  The total range of the excluded angles in the worst case, when there are no overlapping excluded regions, is $2 t  \gamma$. 
This and the convergence of Protocol Ia is proven in Sec.~(\ref{AppendixRG}).

In the noisy case, we define the noise threshold as the probability $p$ below which Protocol Ia converges.
We consider full rank noise, which can occur at any point, before, during or after the interaction of the probe with the field, of the form 
\be
\mathcal{E}(\rho)=(1-p)\rho + p(p_x X \rho X + p_y XZ \rho ZX +p_z Z \rho Z),
\label{eq:noise}
\ee
where $1 \geq p, p_{x,y,z} \geq 0$ and add up to one. This incorporates noise parallel ($p_x=p_y=0$), perpendicular ($p_z=p_y=0$) and combinations thereof. 

Several recent works have studied the effect of noise of various ranks on the scaling of precision of phase estimation~\cite{Matsuzaki2011,Chaves2013, Sekatski2017metrologyfulland,Layden17}. 
All our results are valid for all allowed values of $p_{x,y,z}.$

\begin{figure*}
	\begin{subfigure}[t!]{0.31\textwidth}
		\includegraphics[scale=0.16]{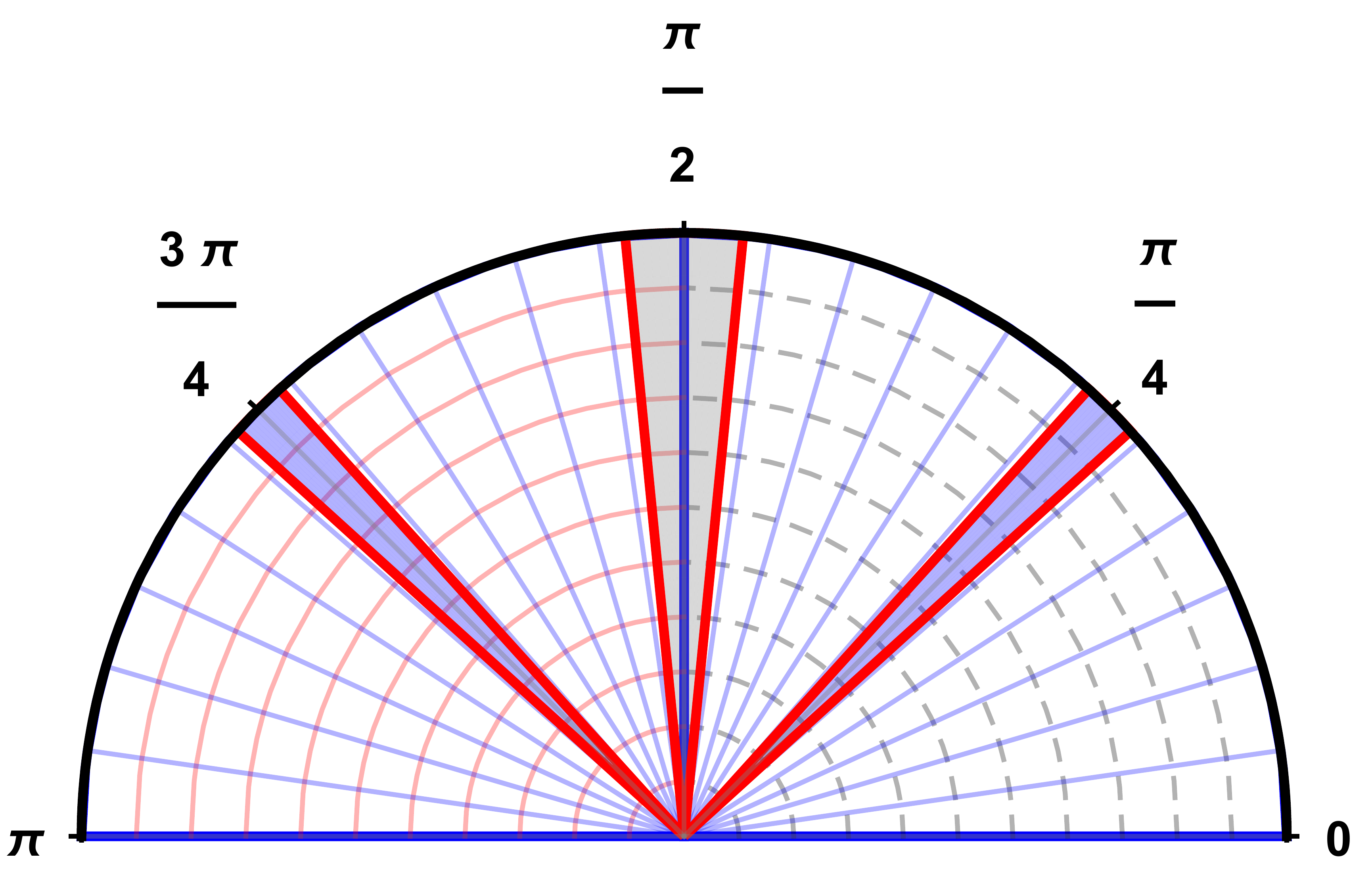}
		\caption{}\label{fig:estimator1}
	\end{subfigure}
		\quad 
	\begin{subfigure}[t!]{0.31\textwidth}
		\includegraphics[scale=0.42]{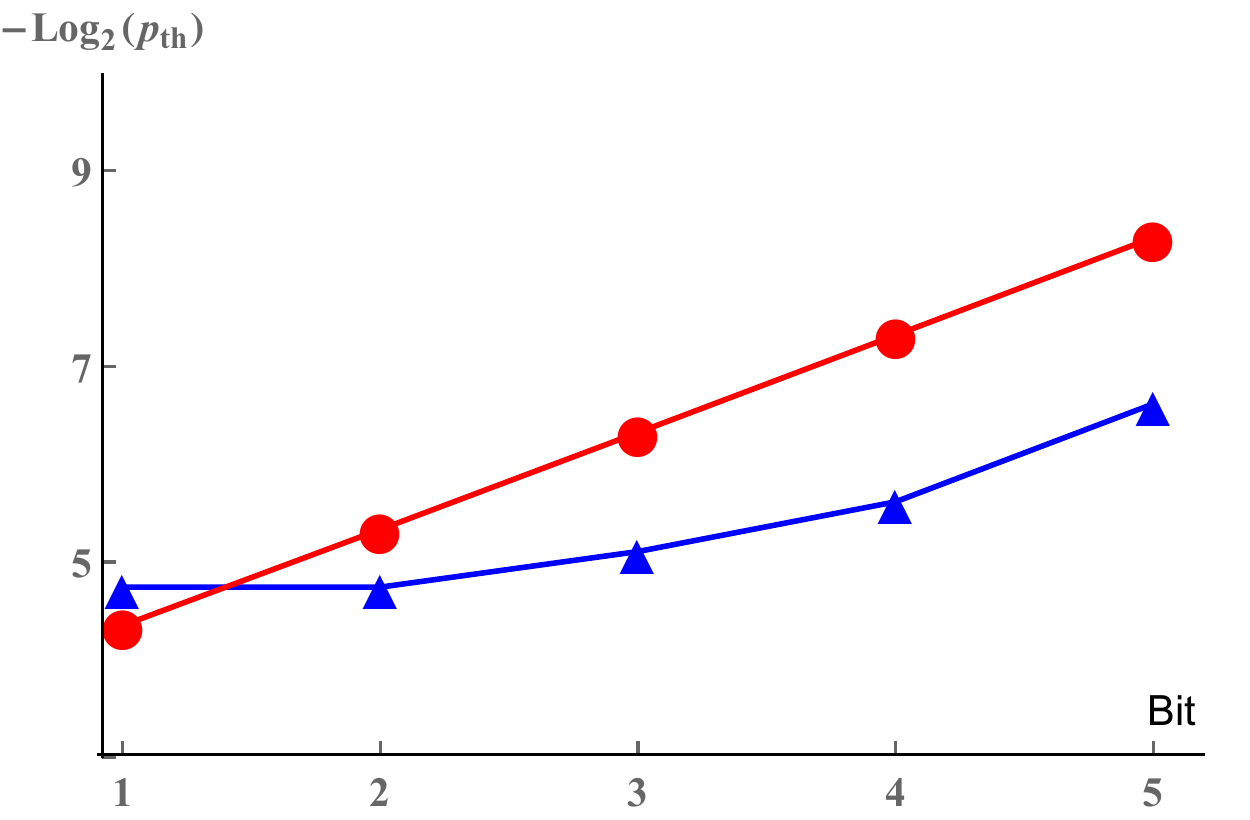}
		\caption{}\label{fig:g32}
	\end{subfigure}
	\quad 
	\begin{subfigure}[t!]{0.31\textwidth}
		\includegraphics[width=\linewidth]{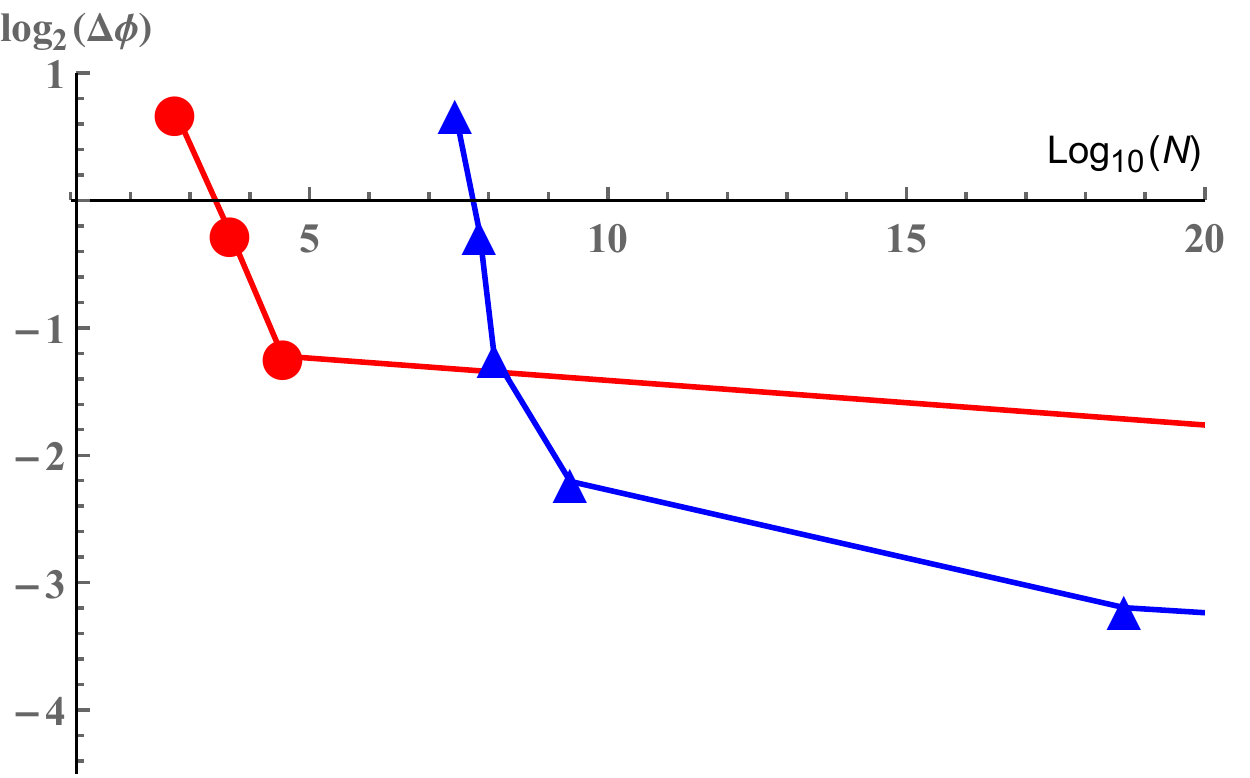}		
		\caption{}\label{fig:resources}
	\end{subfigure}
	\caption{(a) Scheme with $\gamma=\pi/32$. The allowed region for $\phi$ is divided into $[0,\pi/2]$ and $[\pi/2,\pi)$. The decision boundaries are the red lines. For $j=1$ the excluded region (grey) is centred around $\pi/2$. For $j=2$, the excluded region (blue) is centred either around $\pi/4$ if $b_1=0,$ or around $3\pi/4$ if $b_1=1$. This pattern continues for $j>2$.
(b) Noise thresholds for $\gamma=\pi/32$. Red: Protocol Ia; Blue: Protocol Ib.  That the blue line is below the red demonstrates the benefits of FT phase estimation for higher bits. 
(c) Standard deviation \emph{vs} resources  for $\gamma=\pi/32, p=0.63\%$ Red: Protocol Ia. Blue: Protocol Ib, and markers denote bits starting from one and increasing left to right. Improvement from fault tolerance appears in estimating higher bits. See Appendix~\ref{appendix:extra_graphs} for other values of $\gamma.$	}
\label{fig:gammas}
\end{figure*}

Mathematically, Protocol Ia converges for $p<p_{\text{th}},$ where the threshold for the noise affecting the field $p_{\text{th}}$ is the solution of (See Sec.~(\ref{AppendixRG2}))
\be
1-(1-p)^{2^{t-1}} =  \delta(\gamma) = |\sin\gamma|/2.
\label{eqn_phys_thr} 
\ee
The robustness of Protocol Ia against noise depends on $\gamma.$  A larger $\gamma$ excludes more angles but makes the protocol more robust against noise.
Our FT protocols overcome this trade-off.
The threshold $p_{\text{th}}$ obtained from Eqn.~(\ref{eqn_phys_thr}) and presented in Fig.~(\ref{fig:g32}) sets the benchmark against which we compare our next two FT protocols. 
 A larger $p_{\text{th}}$ denotes a greater resilience to noise. 

The number of field interrogations, our resource, required for Protocol Ia to converge is  (See Sec.~(\ref{AppendixRG2}))
\be
N = (2^t-1)  \frac{1}{2\left(\delta(\gamma)-p_f\right)^2} \ln \left(\frac{2t}{\epsilon}\right),
\ee
where $p_f = 1-(1-p)^{2^{t-1}}$. We plot the standard deviation of our estimate $\Delta \phi$ against the resources required for this protocol for a fixed $p$ and $\gamma$ in Fig.~(\ref{fig:resources}). If $p \ll p_{\text{th}}$ for a given $t,$ $\Delta \phi = O(\log N/N),$ where the logarithmic term appears due to bitwise estimation~\cite{rudolph2003quantum} and the $1/N$ term represents the `Heisenberg scaling' in very low noise.

\textit{Error detection against field noise:} First we assume noiseless devices. 
Protocol Ib begins by creating a Bell state between the probe and an ancilla. The probe is then encoded using QRMCs.
The encoded subsystem interrogates the field transversally and is measured in the logical $X$ basis. This, along with appropriate local correction, teleports information of $\phi$ onto the ancillae at the physical level. 
This process, represented by the green rhombuses and blue boxes in Fig.~(\ref{fig:protocol2}) is repeated $2^{j-1}$ times, using the output of one step as the input to the next to estimate $b_{j}$.
Protocol Ib combats noise of the form of Eqn.~(\ref{eq:noise}) in the field using error detection.

\hrulefill

\textbf{Protocol Ib}

For $j=1,\ldots,t$

\begin{enumerate}

\item Repeat $M$ times\\
(i) Prepare probe $\ket{+}$. Set $k=1$.\\
(ii) Prepare ancilla $\ket{0}$. Apply CNOT between probe and ancilla. Encode probe by QRM($1,j+2$). \\
(iii) Interrogate field transversally with probe. Apply error detection on probe. Restart (i) if syndrome measurements reject.\\
(iv) Teleport by measuring probe in logical $X$ and adapting Pauli frame accordingly (See Fig.~(\ref{protocolIbcircuit})).\\
(v) If $k<2^{j-1}, $ increase $k$ by one, use ancilla as new probe and return to (ii). \\ 
(vi) Measure $X$.

\item Step 2 of Protocol Ia with $\gamma$ replaced by $\gamma'.$

\item If $j \neq t$ increase $j$ by one and go to step $1$, otherwise exit and output $$\widehat{\phi}=\widehat{b}_1 \frac{\pi}{2} + \widehat{b}_2 \frac{\pi}{4} + \ldots +  \widehat{b}_t \frac{\pi}{2^t}$$
\end{enumerate}

\vskip-0.15in

\hrulefill

The decision boundaries of Protocol Ib are defined by parameter $\gamma'$- the `logical' version of $\gamma$. This difference arises if $R_z(\phi)$ is not transversal for the QRMCs. 
If $\gamma$ was  the physical rotation, the logical state after step 1(iv) of Protocol Ib undergoes a $Z$-rotation by (Lemma~(\ref{lemma_disc}), Sec.~(\ref{Appendix_disc}))
\be
\gamma'= \gamma - 2 \arctan \left( \frac{\sin(2^{j+1} \gamma)}{(2^{j+2}-1)+\cos(2^{j+1} \gamma)} \right),
\label{eq:gamma}
\ee
For large $j$ this non-transversality has a small effect since $\gamma' = \gamma - O(2^{-j}).$
Following the analysis of Protocol Ia, the range of the excluded angles in the worst case is again $2t \gamma,$ not $2t \gamma'$. 

The probability of logical error in a single interrogation is the probability that the syndrome measurements do not detect the errors and the errors corrupt $X$ measurement. 
Since $\phi$ is unknown we cannot apply a suitable dephasing transformation (also known as twirl) on the noisy states to reduce noise to only $Z$ errors, unlike FT quantum computing~\cite{bravyi2005universal}.  
So we measure both $X$ and $Z$ stabilizers and the corresponding failure probabilities $p^{X}_{\text{err}}$ and $p^{Z}_{\text{err}}$ are given in Eqn.~(\ref{eqn:x_detection}) and~(\ref{eqn:z_detection}), Sec.~(\ref{Appendix_thres}). 
The threshold for $p$ is now obtained by solving $ p_f \equiv 1-(1-p^{X}_{\text{err}})^{2^{j-1}}(1-p^{Z}_{\text{err}})^{2^{j-1}} = \delta(\gamma'),$ which corresponds to  Eqn.~(\ref{eqn_phys_thr}) at the logical level.
This threshold is presented in Fig.~(\ref{fig:g32}).
For higher order QRMCs, $p^{X}_{\text{err}} \ll p^{Z}_{\text{err}}$ below the threshold. 

The number of field interrogations, our resources, required for Protocol Ib to converge depends on $p_{\mathrm{n}}$, the probability of retransmission due to noise and $p_{\mathrm{r}},$ the probability of retransmission due to non-transversality. Using Lemma~(\ref{lemma_costpost}), Sec.~(\ref{Appendix_disc}),
$
p_{\mathrm{r}} = 1 - \left(1-\frac{1}{2^{j+1}}\right)^{(j+2)2^{j-1}}.
\label{eq:smallrot}
$
If the probabilities of passing the $X$ and $Z$ error syndrome measurements for bit $j$ are given by $p_{X\rm{pass}} $ and $p_{Z\rm{pass}} $ respectively (Eqn.~(\ref{eqn:x_pass}) and~(\ref{eqn:z_pass}), Sec.~(\ref{Appendix_thres})),  
$p_{\mathrm{n}} = 1 - \left( p_{X\rm{pass}}  p_{Z\rm{pass}}  \right)^{2^{j-1}}.$
This gives
\be
N= \sum_{j=1}^t 2^{j-1} C(j)\frac{1}{2\left(\delta(\gamma') - p_f\right)^2}  \ln \left(\frac{2t}{\epsilon}\right),
\ee
with $C(j) = (2^{j+2}-1)/(1-p_{\mathrm{n}})(1-p_{\mathrm{r}})$ being the overhead from the QRMC. We plot $\Delta \phi$ versus the resources required -- including extra interrogations due to retransmissions -- for a fixed $p$ and $\gamma$ in Fig.~(\ref{fig:resources}).

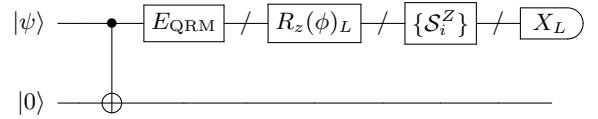
\begin{figure}[t]	
%	\vskip0.6in	
	\mbox{
\Qcircuit  @C=0.9em @R=2em {
		& \lstick{\ket{\psi}}&  \qw	& \ctrl{1}		& \gate{E_{\text{QRM}}}	&  {/} \qw	& \gate{R_z(\phi)_L} &{/}  \qw	&  \gate{\{ \mathcal{S}_i^{Z} \}}	& {/}  \qw	& \measureD{X_L}		& 		& 	 \\
			& \lstick{\ket{0}}& \qw		&\targ	&  \qw		& \qw	&  \qw	&   \qw	&   \qw & \qw	&   \qw	 }
			}
		\caption{Steps 1(i) - 1(iv) of Protocol Ib. For $k=1,$ $\ket{\psi}= \ket{+}$ or else the output of previous $k$. $E_{\text{QRM}}$ is the encoding circuit for QRM($1,j+2$), $R_z(\phi)_L$ is logical (transversal) application of the field, $\{ \mathcal{S}_i^{Z} \}$ are all the $Z$ stabilizer measurements and $X_L$ is logical $X$ measurement from which we extract the $X$ syndromes.}\label{protocolIbcircuit}
\end{figure}

Now we deal with noise in devices, which we assume to be independent of the field noise.
This results in the new threshold equation 
\be
1-(1-p'^{X}_{\text{err}})^{2^{j-1}}(1-p'^{Z}_{\text{err}})^{2^{j-1}}(1-p')^{3 \times 2^{j-1} +2} = \delta(\gamma')
\label{eq:noisydev}
\ee
involving noise of the form of Eqn. (\ref{eq:noise}) for the field ($p$) and the devices ($p'$).
The failure probabilities $p'^{X}_{\text{err}}, p'^{Z}_{\text{err}}$ now have an additional contribution from the noisy devices, which itself includes the effect of noisy non-transversal encoding and noisy syndrome measurements. 
The latter are $E_{\text{QRM}}$ and $ \{\mathcal{S}_i^{Z}\}, X_L$ in Fig.~(\ref{protocolIbcircuit}). 
Since Eqn.~(\ref{eq:noisydev}) involves two variables $p,p',$ there is no unique solution for the two thresholds - $p_{\text{th}}$ and $p'_{\text{th}}.$ 
For small $p'_{\text{th}},$ see Fig. (\ref{fig:threshold_relations}) for improvements in $p_{\text{th}}$ and
Sec.~(\ref{Appendix_noisydevIb}) for details.

\begin{figure}[b]
		\includegraphics[scale=0.65]{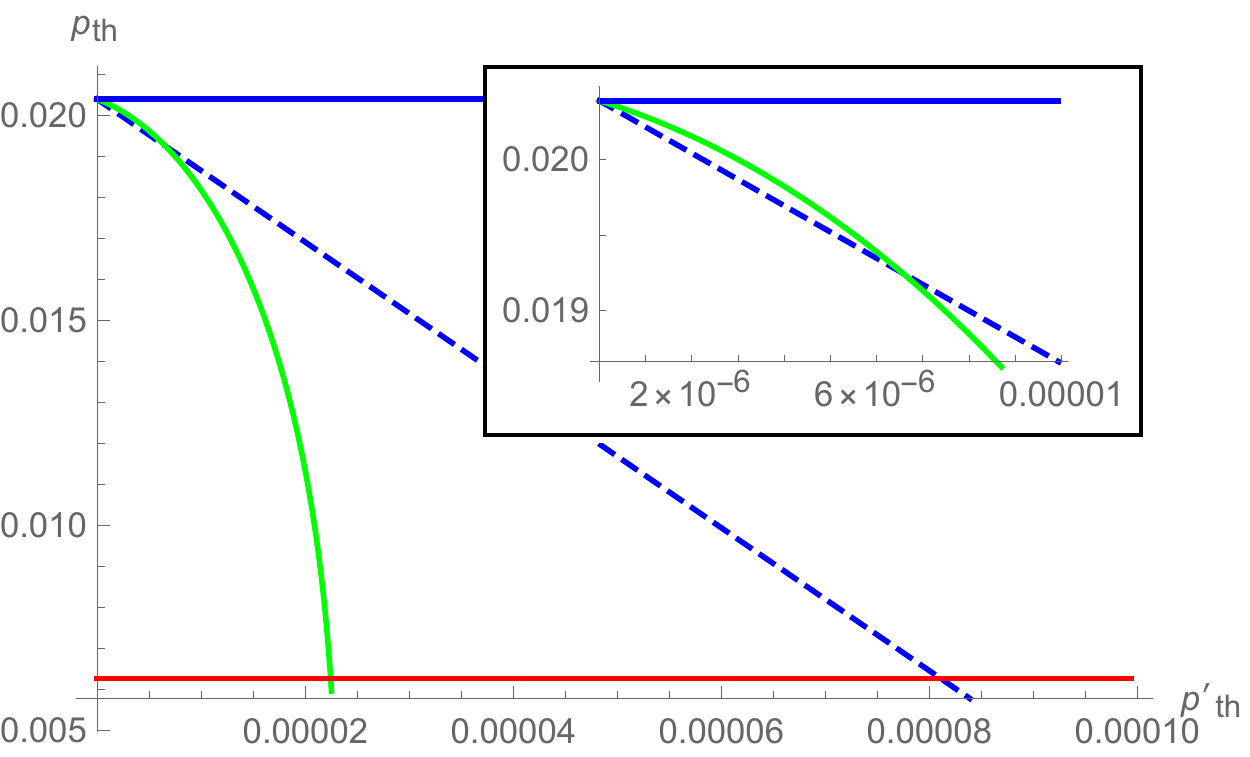}
		\caption{Relationship between thresholds $p_{\text{th}}$ and $p'_{\text{th}}$ for $\gamma=\pi/32$ and $j=4$. Improved threshold for Protocol Ib with device noise (Dashed blue) over Ia (Red). Improved threshold of Protocol Ic (Green) over Protocol Ib with device noise in sub-region enlarged. Protocol Ib without device noise (Solid blue) is provided for reference.
}\label{fig:threshold_relations}
	\end{figure}

\textit{Fault tolerance everywhere:} 
Finally, Protocol Ic (Fig. (\ref{fig:protocol3})) combats noise at any stage of the sensing process.
In quantum computation, the lack of transversal universal gate sets~\cite{eastin2009restrictions} is overcome by either gate distillation or code switching. In metrology, the former is prohibitive because $\phi$ is unknown (See Sec.~(\ref{Appendix_why})).
 Our Protocol Ic proceeds via switching~\cite{1367-2630-17-8-083002} between the QRM($1,3$) Steane code which is transversal for $H$ and higher order QRMCs~\cite{anderson2014fault}, along with the error detection method of Protocol Ib.

\hrulefill

\textbf{Protocol Ic}

For $j=1,\ldots,t$

\begin{enumerate}

\item Repeat $M$ times\\
(i) Prepare $\ket{+_L}$ using FT procedure employing the Steane code and switch to QRM($1,j+2$). Set $k=1$.\\
(ii) Prepare ancilla $\ket{0_L}$ using FT procedure employing QRM($1,j+2$). Apply transversal FT CNOT between probe and ancilla. \\
(iii) Interrogate field transversally with probe. Apply error detection on probe. Restart (i) if syndrome measurements reject.\\
(iv) Teleport by measuring probe in logical $X$ and adapting Pauli frame accordingly  (See Fig.~(\ref{protocolIccircuit})). \\
(v) If $k<2^{j-1}, $ increase $k$ by one, use ancilla as new probe and return to (ii).  \\ 
(vi) FT measurement of logical $X$.

\item Step 2 and 3 of Protocol Ib.
\end{enumerate}

\vskip-0.2in

\hrulefill

Protocol Ic behaves exactly  as Protocol Ib in terms of convergence and resource requirements.
The thresholds for Protocol Ic are given by modifying the failure probabilities $p'^{X}_{\text{err}}, p'^{Z}_{\text{err}}$, the number of points of failure and the noise $p'$ in Eqn.~(\ref{eq:noisydev}). 
Protocol Ic has no non-transversal encoding and failure probabilities just include noisy syndrome measurements (Sec.~(\ref{Appendix_noisydevIc})). 
Compared to Protocol Ib, Protocol Ic now provides a larger improvement in $p_{\text{th}}$ over Protocol Ia, but over a reduced range of $p'$ as shown in Fig.~(\ref{fig:threshold_relations}).
The improvements are limited by the poor QRMC error correction thresholds.
Larger improvements should be attainable if codes with better thresholds and suitable transversality properties can be designed.

\section{Analysis of the parametrised Rudolph-Grover scheme -- convergence, noise resilience and resources}
\label{AppendixRG}

The unknown phase parameter $\phi$ is expressed in a radix $2$ expansion as $\phi = 2 \pi \times 0.b_0 b_1 b_2  \ldots = b_0 \frac{2 \pi}{2} + b_1 \frac{2 \pi}{2^2} + b_2 \frac{2 \pi}{2^3} \ldots$. Setting $b_0=0$ leads to
\be
\phi = b_1 \frac{\pi}{2}+ b_2 \frac{\pi}{4} \ldots .
\ee
We denote our estimate of the unknown parameter after the protocol as $\widehat{\phi}$.

\subsection{Noiseless case}
\label{AppendixRG1}

Assume first that Protocol Ia is implemented in the noiseless case. Let $p_1$ be the probability of obtaining $0$ (eigenvalue $+1$) in our measurements in a noiseless Protocol~Ia for $j=1$. Let $\widehat{p}_1$ be our (real valued) estimate, which comes from averaging over $M$ i.i.d. repetitions. Seeking $|\widehat{p}_1 - p_1| \leq \delta$ leads to
\begin{eqnarray}
\mathrm{prob}(|\widehat{p}_1 - p_1 | \leq \delta ) \geq 1 - 2 e^{-2M \delta^2}    \nonumber 
\end{eqnarray}
from the Hoeffding inequality. 

Let us choose $\delta =  |\cos^2 (\frac{\pi}{4}) - \cos^2 (\frac{\pi}{4} - \frac{\gamma}{2})  | = |\sin(\gamma)/2|$. For $\gamma = \pi/8$, $\delta \approx 0.191,$ for $\gamma = \pi/32$, $\delta \approx 0.049$.
This implies that, for angle $\phi$ in the allowed region $[0,\pi/2-\gamma]$ or $[\pi/2+\gamma,\pi]$, if\\
(i) $0 \leq \widehat{\phi}_1 <   (\pi/2 - \gamma),$
\be
\mathrm{prob}\left(0 \leq \phi  < \frac{\pi}{2}\right) \geq 1 -    2 e^{-2M \delta ^2}
\ee
and $\mathrm{prob}(\widehat{b}_1 = b_1 =0 )$ is equally high;\\
(ii) $(\pi/2 + \gamma) < \widehat{\phi}_1 \leq  \pi ,$
\be
\mathrm{prob}\left(\frac{\pi}{2} \leq \phi  < \pi\right) \geq 1 -    2 e^{-2M \delta  ^2}  \nonumber
\ee
and $\mathrm{prob}(\widehat{b}_1 = b_1 =1 )$ is equally high. This concludes the analysis for $j=1$.

Assuming that the estimation for $j=1$ was correct, we proceed to the estimation of the other bits. We use induction to calculate all the conditional probabilities. Suppose all bits $b_k$, $k<j$ are correctly estimated. 
The probe after the $2^{j-1}$ consecutive interrogations is $(\ket{0}+e^{i \phi_j} \ket{1})/\sqrt{2}$, where $\phi_j = 2^{j-1} \phi = 2 \pi \times 0.b_{j-1} b_j b_{j+1} \ldots = b_{j-1} \frac{2 \pi}{2} + b_j \frac{2 \pi}{2^2} + b_{j+1} \frac{2 \pi}{2^3} \ldots$, where $b_{j-1}$ is known from previous estimation.

Again, using the Hoeffding inequality, we bound the probability of having error smaller than the same parameter $\delta$. The allowed region for $\phi_j - b_{j-1} \pi$ should be again 
$[0,\pi/2-\gamma]$ or $[\pi/2+\gamma,\pi]$, and if \\
(i) $\widehat{b}_{j-1} \pi \leq \widehat{\phi}_j <  \widehat{b}_{j-1} \pi + (\pi/2 - \gamma),$
\be
\mathrm{prob}\left(0 \leq \phi_j - b_{j-1} \pi  < \frac{\pi}{2}\right) \geq 1 -    2 e^{-2M \delta ^2}, \nonumber
\ee
and $\mathrm{prob}(\widehat{b}_j = b_j =0 )$ is equally high, conditioned on the estimations of prior bits being correct; \\
(ii) $\widehat{b}_{j-1} \pi + (\pi/2 + \gamma) \leq \widehat{\phi}_j \leq \widehat{b}_{j-1} \pi +  \pi $,
\be
\mathrm{prob}\left(\frac{\pi}{2} \leq \phi_j - b_{j-1} \pi  < \pi\right) \geq 1 -    2 e^{-2M \delta  ^2},  \nonumber
\ee
and $\mathrm{prob}(\widehat{b}_j = b_j =1 )$ is equally high, conditioned on the estimations of prior bits being correct. This concludes our analysis for $j$.

The probability that all the bits up to $b_t$ being estimated correctly is lower bounded by $(1-2 e^{-2M \delta  ^2})^t \geq 1-  2 t e^{-2M \delta  ^2}$. To have a maximum error $\epsilon$ in our estimator to be correct up to the $t$-th bit, we choose $M$ such that $\epsilon \geq 2 t e^{-2M \delta  ^2}$. This leads to
\be
M \geq \frac{1}{2\delta^2} \ln \left(\frac{2t}{\epsilon}\right).
\ee
The total overhead in uses of the field to estimate $\phi$ to $t$ bits with error $\epsilon$ becomes
\be
N = \frac{2^{t}-1}{2\delta^2} \ln \left(\frac{2t}{\epsilon}\right).
\ee

The allowed angles for which the above convergence arguments hold are as follows. 
From the analysis above, the estimation of the first bit converges with high probability if $\phi \in [0,\pi/2-\gamma] \cup [\pi/2+\gamma,\pi]$. Thus the length of the excluded region is $2\gamma$.
For the second bit, consider estimating $\phi_2=2\phi.$ If $b_1=0$, $\phi \in [0,\pi/4-\gamma/2] \cup [\pi/4+\gamma/2,\pi/2]$ and if $b_1=1$, $\phi \in [\pi/2,\pi/2 + \pi/4-\gamma/2] \cup [\pi/2 + \pi/4+\gamma/2,\pi]$. Length of the excluded region is again $2\gamma$.

In general, consider estimating $\phi_j=2^{j-1} \phi \mod 2 \pi.$ Suppose $b_1= \ldots =b_{j-2}=0.$ If $b_{j-1}=0$,
\be
\phi \in \left[0,\frac{\pi}{2^j}-\frac{\gamma}{2^{j-1}}\right] \cup \left[\frac{\pi}{2^j}+\frac{\gamma}{2^{j-1}},\frac{\pi}{2^{j-1}}\right], \nonumber
\ee and if $b_{j-1}=1,$
\be
\phi \in  \left[ \frac{\pi}{2^{j-1}}, \frac{\pi}{2^{j-1}} + \frac{\pi}{2^j} - \frac{\gamma}{2^{j-1}} \right] \cup \left[ \frac{\pi}{2^{j-1}} + \frac{\pi}{2^j}+ \frac{\gamma}{2^{j-1}}, \frac{\pi}{2^{j-2}} \right]. \nonumber
\ee
Continuing with the $2^{j-2}$ possibilities for $b_1, \ldots, b_{j-2}$, each of which exclude regions of length $\gamma/2^{j-2},$ we obtain a total excluded region of length $2\gamma$.
In the worst case, of the regions not being overlapping, the excluded region has total angle $2 t  \gamma$.

\subsection{Noisy case}
\label{AppendixRG2}

We now consider the noisy case and denote the probability of an error occurring in an interrogation step as $p$. Then, the probability $p_f(p,j)$ of the measurement result being incorrect after a number of interrogations and a final measurement depends on $p$ and the number of interrogations (which depends on $j$). In Protocol Ia, we undertake $2^{j-1}$ interrogations, whereby $p_f$ is upper bounded by $1-(1-p)^{2^{j-1}}$. 

The following analysis holds for any $j$.
Let $p_j$ be the probability of obtaining $0$ (eigenvalue $+1$) if there was no noise. With probability $p_f,$ this result we get will be flipped. Let $p'_j$ be the 'noisy' probability of obtaining $0$. Then
\be
p'_j = p_j (1-p_f) + (1-p_j) p_f,
\ee
whereby $p'_j - p_j = p_f(1-2p_j) $, implying 
\be
|p'_j - p_j | \leq p_f.
\ee
After $M$ repetitions, the Hoeffding inequality gives the noisy estimate $\widehat{p'}_j$ as
\be
\mathrm{prob} \left(|\widehat{p}'_j - p'_j | \geq \delta \right) \leq 2 e^{-2M \delta^2} .
\ee
Adding $|p'_j - p_j |$ gives
\be
\mathrm{prob} \left(|\widehat{p}'_j - p'_j | + |p'_j - p_j | \geq \delta + |p'_j - p_j | \right)\leq 2 e^{-2M \delta^2} . \nonumber
\ee
We then use the fact that $(\text{prob}(x \geq b) \leq c)\wedge(y\leq x) \Rightarrow \text{prob} (y \geq b) \leq c$, which can be proven by writing the probabilities as integrals and changing variables. Since $ |\widehat{p}'_j - p_j | \leq |\widehat{p}'_j - p'_j | + |p'_j - p_j |$,
\be
\mathrm{prob}(|\widehat{p}'_j - p_j | \geq \delta + |p'_j - p_j | )\leq 2 e^{-2M \delta^2} .  \nonumber
\ee
Thus,
\be
\mathrm{prob}(|\widehat{p}'_j - p_j | \geq \delta)\leq 2 e^{-2M (\delta - |p'_j - p_j | ) ^2} \leq    2 e^{-2M (\delta - p_f ) ^2},  \nonumber
\ee
whereby
\be
\mathrm{prob}\left(|\widehat{p}'_j - p_j | < \delta \right) > 1 -    2 e^{-2M (\delta - p_f ) ^2}.  \nonumber
\ee
We therefore get confidence in our estimation for the $j$-th bit only if $p_f < \delta,$ in which case the same proof of convergence holds as in the noiseless case. This means that there is a probability $p$ of failure in a single interrogation above which the protocol does not converge and is given by the solution of $1-(1-p)^{2^{t-1}} = \delta = |\cos^2 (\frac{\pi}{4}) - \cos^2 (\frac{\pi}{4} - \frac{\gamma}{2})  | $ for a fixed $\gamma$ and $t$. We call this the noise threshold $p_{\text{th}}$ of the protocol.

Following the same analysis as before and replacing $\delta$ by $\delta-p_f(p,t)$ we have
\be
N = \frac{2^{t}-1}{2(\delta-p_f)^2} \ln \left( \frac{2t}{\epsilon} \right). \nonumber
\ee

\paragraph*{Standard deviation:}
A canonical way of quantifying the performance of an estimator is its standard deviation $\Delta \phi$. We derive this for a fixed $\epsilon$ adapting the technique from Ref.~\cite{de2005quantum}. At the conclusion of the estimation protocol, with probability $1-\epsilon$ an estimate $\phi_{est}$ is obtained which is the correct one up to $t$ bits of precision ($\Delta \phi_{est} \leq \pi/2^{t+1}$). Otherwise, we get a random estimate $\phi_{r},$  which we assume to be independent of $\phi_{est}$ to ease our analysis. Thus $\phi = (1-\epsilon) \phi_{est} + \epsilon \phi_{r},$ and
\ben
\Delta \phi &=& \sqrt{(1-\epsilon)^2 (\Delta\phi_{est})^2  + \epsilon^2 (\Delta\phi_{r})^2} \nonumber \\
		&=& \sqrt{(1-\epsilon)^2 \frac{\pi^2}{2^{2(t+1)}}  + \epsilon^2 \pi^2} . \nonumber
\een
We choose $\epsilon$ so that $\Delta \phi$ decreases inversely with the largest possible function of the total overhead. Let $\epsilon = 1/2^{t}$. Since $\Delta \phi = O(2^{-t})$ for noise significantly smaller than the threshold, $N = O(t 2^t)$ , and we have $\Delta \phi = O(\log N/N),$ ignoring terms logarithmic in $t$.

\section{Non-transversality effects in QRMCs}
\label{Appendix_disc}

We provide results for QRMCs for the effect of applying transversally gates that are non-transversal for a particular QRMC, under postselection for the correct syndrome outcomes.
The equations for bit $j$ in our protocols are obtained by setting $m=j+2$ in the following Lemmas.

\begin{lemma}
Apply $R_z(-\phi)$ transversally, where $\phi=0.b_0 b_1 b_2 \ldots$, on a logical single qubit state encoded by code QRM($1,m$). Postselecting on accepting the syndromes creates, up to a global phase, a logical rotation of\begin{eqnarray}
\phi'= \phi - 2 \arctan \left( \frac{\sin(\phi_{m})}{(2^m-1)+\cos(\phi_m)} \right),
\label{eq:diff}
\end{eqnarray}
around the $Z$ axis, where $\phi_m = 2^{m-1} \phi = 2 \pi \times 0.b_{m-1} b_{m} b_{m+1} \ldots.$
\label{lemma_disc}
\end{lemma}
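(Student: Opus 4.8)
The plan is to track the transversal diagonal gate through the logical codewords, apply the postselection (codespace) projector, and reduce what remains to elementary algebra with complex exponentials. First I would invoke the structure of QRM$(1,m)$ recalled in Appendix~\ref{appendix_ft}: it is the CSS code built from the punctured Reed--Muller code $\mathrm{RM}(1,m)$, so that its logical zero is the uniform superposition $\ket{0}_L=2^{-m/2}\sum_{x\in\mathcal S_m}\ket{x}$ over the $[2^m-1,m,2^{m-1}]$ simplex code $\mathcal S_m$, whose $2^m$ codewords are the zero string together with $2^m-1$ strings of Hamming weight $2^{m-1}$, and whose logical $X$ is the all-ones operator $X^{\otimes(2^m-1)}$, so that $\ket{1}_L=2^{-m/2}\sum_{x\in\mathcal S_m}\ket{x+\mathbf 1}$ is supported on one string of weight $2^m-1$ and $2^m-1$ strings of weight $2^{m-1}-1$. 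The point for what follows is that the supports $\mathcal S_m$ and $\mathcal S_m+\mathbf 1$ of $\ket{0}_L$ and $\ket{1}_L$ are disjoint; I write a general logical state as $\alpha\ket{0}_L+\beta\ket{1}_L$.

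Next, since $R_z(-\phi)=\text{diag}(e^{i\phi/2},e^{-i\phi/2})$, the transversal operator $R_z(-\phi)^{\otimes(2^m-1)}$ multiplies $\ket{x}$ by $e^{i\phi(2^m-1)/2}\,e^{-i\phi\abs{x}}$, so up to the $x$-independent global phase it is the diagonal gate $\ket{x}\mapsto e^{-i\phi\abs{x}}\ket{x}$. Accepting all syndromes is the projection $\Pi=\proj{0_L}+\proj{1_L}$ onto the codespace (a diagonal transversal gate commutes with every $Z$-type stabilizer, so only $X$-syndromes can be nontrivial, but this is irrelevant here). Because the gate is diagonal and the two supports are disjoint, it creates no coherence between the logical sectors, so the postselected map sends $\alpha\ket{0}_L+\beta\ket{1}_L$ to $2^{-m}(\alpha A_0\ket{0}_L+\beta A_1\ket{1}_L)$, where, using $\phi\,2^{m-1}=\phi_m$, one has $A_0=\sum_{x\in\mathcal S_m}e^{-i\phi\abs{x}}=1+(2^m-1)e^{-i\phi_m}$ and $A_1=\sum_{x\in\mathcal S_m}e^{-i\phi\abs{x+\mathbf 1}}=e^{-i\phi(2^m-1)}+(2^m-1)e^{-i\phi(2^{m-1}-1)}$. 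A one-line rearrangement gives $A_1=\overline{A_0}\,e^{-i\phi(2^m-1)}$, hence $\abs{A_0}=\abs{A_1}$, so the normalized postselected logical operation is the unitary $\text{diag}(A_0,A_1)/\abs{A_0}$, equal up to a global phase to $\text{diag}(1,e^{i\phi'})$ with $\phi'=\arg(A_1/A_0)$ -- that is, a logical $Z$-rotation by $\phi'$.

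Finally I would simplify $\phi'$. From $A_1/A_0=e^{-i\phi(2^m-1)}\,\overline{A_0}/A_0=e^{-i\phi(2^m-1)}e^{-2i\arg A_0}$ one gets $\phi'=-\phi(2^m-1)-2\arg A_0\pmod{2\pi}$. Factoring $A_0=e^{-i\phi_m}\big[(2^m-1)+e^{i\phi_m}\big]$ and using $(2^m-1)+\cos\phi_m>0$ gives $\arg A_0=-\phi_m+\arctan\!\big(\sin\phi_m/[(2^m-1)+\cos\phi_m]\big)$; substituting, and using $2\phi_m=2^m\phi$ so that $-\phi(2^m-1)+2\phi_m=\phi$, yields Eq.~(\ref{eq:diff}). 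Setting $m=j+2$ then reproduces Eq.~(\ref{eq:gamma}).

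The only non-routine ingredient is the structural input of the second paragraph: that the logical zero of QRM$(1,m)$ is exactly the simplex-code superposition with this two-point weight distribution and that $X^{\otimes(2^m-1)}$ represents logical $X$; this requires checking which coordinate of $\mathrm{RM}(1,m)$ is punctured and that $X^{\otimes(2^m-1)}$ commutes with all $Z$-stabilizers because simplex codewords have even weight. Granting that, everything else is bookkeeping; the one genuine simplification is that a diagonal transversal gate cannot mix the disjoint supports of $\ket{0}_L$ and $\ket{1}_L$, which forces the postselected logical channel to be diagonal and, since $\abs{A_0}=\abs{A_1}$, unitary -- so the "non-transversal" rotation nevertheless acts as an exact logical $Z$-rotation, only by a shifted angle, and the transversal case is recovered since $\phi_m\in 2\pi\mathbb Z$ makes the $\arctan$ term vanish.
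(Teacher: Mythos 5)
Your proof is correct and takes essentially the same route as the paper: project onto the codespace, compute the per-sector amplitudes $1+(2^m-1)e^{-i\phi_m}$ and $e^{-i(2^m-1)\phi}+(2^m-1)e^{-i(\phi_m-\phi)}$ on the disjoint supports of $\ket{0_L}$ and $\ket{1_L}$, and read off the relative phase, exactly as in Sec.~(\ref{Appendix_disc}). Your explicit remark that $A_1=\overline{A_0}\,e^{-i(2^m-1)\phi}$, so the two amplitudes have equal magnitude and the postselected map is genuinely a unitary logical $Z$-rotation, is a small tightening of a step the paper leaves implicit.
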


\begin{proof}
Let
\be
P_{+1} = \prod_{i=1}^{2^m-m-2} \frac{ (I+S^Z_{i})}{2^{2^m-m-2}} \prod_{i=1}^{m} \frac{ (I+S^X_{i})}{2^m}
\ee
be the projector to the code space, i.e. the positive eigenspace of the $Z$ and the $X$ syndrome measurements $S^Z_i$ and $S^X_i$ respectively.
The effect of applying $R_z(-\phi)$ transversally and projecting to $P_{+1}$ on logical state $\ket{0_L}$ leads to $P_{+1} R_z(-\phi)^{\otimes 2^m-1} \ket{0_L}$ which is
\be
\frac{1}{\sqrt{2^m}} P_{+1} \left (\ket{\boldsymbol{0}} +e^{-i 2^{m-1} \phi} \!\!\sum_{x \in \overline{RM}\setminus \{ \boldsymbol{0} \} } \ket{x} \right). \nonumber
\ee
The projections coming from the $Z$ stabilizer measurements have no effect on the state. The elements $S^X_i$ correspond to the generators of  the $\overline{RM}$ code (by replacing the $1$'s with $X$'s and the $0$'s with $I$'s) and therefore $\prod_{i=1}^{m}  (I+S^X_{i})$ gives a sum over stabilizers that correspond to all $x \in \overline{RM}$. When applied to the above state they map each codeword to the sum of all  codewords in the code and therefore create the same (global) phase:
\be
1+ (2^m-1) e^{-i \phi_m}  = 1 + (2^m-1)\cos\phi_m + i (2^m-1)\sin\phi_m \nonumber
\ee
where $\phi_m = 2^{m-1} \phi.$ Similarly for the logical $\ket{1_L}$ state, we get $P_{+1} R_z(-\phi)^{\otimes 2^m-1} \ket{1_L}$
\be
\frac{1}{\sqrt{2^m}} P_{+1} \left(e^{-i (2^{m}-1) \phi}\ket{\boldsymbol{1}} + e^{-i (2^{m-1}-1) \phi} \sum_{x \in \overline{RM}} \ket{x + \boldsymbol{1}} \right) \nonumber
\ee
Again, the projectors from the $X$ measurements mix all the phases leading to a global phase of 
\be
e^{-i (2^{m}-1) \phi} + (2^m-1) e^{-i (\phi_m- \phi)}. \nonumber
\ee
Therefore the whole operation adds between the computational basis states a relative phase of 
\begin{eqnarray}
 \phi' &=& \arctan \left( \frac{(2^m -1) \sin \phi_m}{1+ (2^m-1) \cos\phi_m}\right) \nonumber \\
 &-& \arctan \left( \frac{\sin ((2^m -1) \phi) + (2^m-1) \sin (\phi_m- \phi) }{\cos ((2^m -1) \phi) + (2^m-1) \cos (\phi_m- \phi)} \right), \nonumber 
\end{eqnarray}  
wherefrom Eqn.~(\ref{eq:diff}) emerges via trigonometry. 
\end{proof}

The cost of postselection for rotations that are not transversal for QRM$(1,m)$ is given below.

\begin{lemma}
\label{lemma_costpost}
The probability of failure in any of the  $S^X_i$ syndrome measurements on a QRM($1,m$)-encoded state, on which transversal $R_z(-\phi)$ has been applied, is less or equal to $1 -\left(1-\frac{1}{2^{m-1}}\right)^{m}$.
\end{lemma}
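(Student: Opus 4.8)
The plan is to bound from below the probability of accepting all $m$ of the $X$-syndrome measurements by $(1-1/2^{m-1})^{m}$, via a chain rule in which each conditional single-syndrome rejection probability is at most $1/2^{m-1}$. Write $\ket{\psi}=R_z(-\phi)^{\otimes(2^m-1)}\ket{\psi_L}$ for the physical state produced from an arbitrary logical state $\ket{\psi_L}$. Every $Z$-stabilizer is diagonal in the computational basis and hence commutes with $R_z(-\phi)^{\otimes(2^m-1)}$, so $\ket{\psi}$ remains in the $+1$ eigenspace of all $S^Z_i$; only the $m$ generators $S^X_1,\dots,S^X_m$ can reject, which is exactly what the Lemma is about.

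The first step is to evaluate $\bra{\psi}g\ket{\psi}$ for an arbitrary non-identity element $g$ of the $X$-stabilizer group. Using $R_z(\phi)XR_z(-\phi)=R_z(2\phi)X$ qubitwise on the $|\mathrm{supp}(g)|$ qubits in the support of $g$, together with $g\ket{\psi_L}=\ket{\psi_L}$, one finds $\bra{\psi}g\ket{\psi}=\bra{\psi_L}R_z(2\phi)^{\otimes\mathrm{supp}(g)}\ket{\psi_L}$. The crucial structural input -- the same one already used in the proof of Lemma~\ref{lemma_disc} -- is that the $S^X_i$ generate the punctured first-order Reed--Muller (simplex) code, so every non-identity $g$ has support of weight exactly $2^{m-1}$ and meets any simplex codeword in $0$, $2^{m-2}$ or $2^{m-1}$ coordinates, with $0$ only for the zero codeword and $2^{m-1}$ only for $g$ itself. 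Expanding $\ket{0_L}$ and $\ket{1_L}$ as uniform superpositions over the simplex code and its $\boldsymbol{1}$-coset (whose supports are disjoint since $\boldsymbol 1$ has odd weight and is not a simplex codeword, so all $\ket{0_L}$--$\ket{1_L}$ cross terms vanish), the diagonal phases collapse to a single value $\mu:=1-4\sin^2(\phi_m/2)/2^m$, independent of $g$ and of the logical amplitudes, with $\phi_m=2^{m-1}\phi$. In particular one syndrome rejects with probability $(1-\mu)/2=\sin^2(\phi_m/2)/2^{m-1}\le 1/2^{m-1}$.

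The second step runs the measurements sequentially. After accepting $S^X_1,\dots,S^X_k$ the post-selected state is proportional to $P_k\ket{\psi}$, where $P_k=2^{-k}\sum_{h\in G_k}h$ projects onto the common $+1$ eigenspace of the first $k$ generators and $G_k$ is their $2^k$-element subgroup. Independence of the generators gives $S^X_{k+1}h\ne I$ for all $h\in G_k$, hence $\bra{\psi}S^X_{k+1}P_k\ket{\psi}=\mu$ and $\bra{\psi}P_k\ket{\psi}=2^{-k}(1+(2^k-1)\mu)$, so the conditional rejection probability of $S^X_{k+1}$ is $(1-\mu)/(2(1+(2^k-1)\mu))\le(1-\mu)/2\le 1/2^{m-1}$. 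Multiplying the $m$ conditional acceptance probabilities yields $\mathrm{Pr}[\text{accept all}]\ge(1-1/2^{m-1})^{m}$, i.e. the rejection probability is at most $1-(1-1/2^{m-1})^{m}$. The one step needing care is the identification in the second paragraph of the \emph{same} number $\mu$ for every $g\ne I$ and every logical state; this is bookkeeping with the simplex-code weight structure, and keeping the $\sin^2(\phi_m/2)$ factors and the $1+(2^k-1)\mu$ denominators in fact gives the sharper bound $1-(1-1/2^{m-1})^{2}$, but the stated $m$-th-power form is immediate from the crude per-syndrome estimate and is all that is needed for the resource count.
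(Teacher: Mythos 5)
Your proof is correct and follows essentially the same route as the paper's: sequential conditioning on the $m$ $X$-syndrome measurements, with a per-syndrome rejection probability bounded by $1/2^{m-1}$ using the weight and intersection structure of the simplex ($\overline{RM}$) codewords. You make the paper's terse ``permutation of the sums of kets'' observation fully explicit by showing $\bra{\psi}g\ket{\psi}=\mu=1-4\sin^2(\phi_m/2)/2^m$ uniformly over non-identity $X$-stabilizer elements, and your closing remark that the telescoping product in fact yields the sharper bound $1-\left(1-1/2^{m-1}\right)^{2}$ is a correct bonus.
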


\begin{proof}
The probability of failure in the post-selection of each of the $m$  syndromes is at most $\frac{1}{2^{m-1}}$, for any real rotation. This comes from calculating the probability of getting result $0$ in measurement $i.$ This is given by
\begin{eqnarray}
p_{i} = \bra{\chi_{i-1}}\frac{I+S_i^X}{2} \ket{\chi_{i-1}} =  \frac{1}{2} + \frac{\bra{\chi_{i-1}}S_i^X\ket{\chi_{i-1}}}{2} \nonumber \\
\geq  \frac{1}{2} + \frac{2^{m+1}-8}{2 \times 2^{m+1}} = 1 - \frac{1}{2^{m-1}} \nonumber
\end{eqnarray}
for $\ket{\chi_{i-1}}$ being the state that comes after syndrome measurement $i-1$ (renormalized) and  $\ket{\chi_{0}} = R_z(-\phi)^{\otimes 2^m-1} (\ket{0_L} + e^{i \psi} \ket{1_L})/\sqrt{2}$, for some $\psi$. The key observation here is that $S_i^X \ket{\chi_{i-1}}$ is a permutation of the sums of kets of $\ket{\chi_{i-1}}$, where each sum of kets comes from applying $\prod_{j=1}^{i-1}  (I+S^X_{j})$ on each ket of the initial state $\ket{\chi_{0}}$ when written in the physical representation.
 
The probability of failure of all the $m$ $X$ syndromes -- $X$ syndromes are the only ones that potentially reject -- is therefore: $ 1 -(1-\frac{1}{2^{m-1
}})^{m}$. This creates an extra overhead in the resource count.
\end{proof}

\section{Protocol Ib with noiseless devices  -- error detection failure probabilities}
\label{Appendix_thres}

In order to caclulate the thresholds and resources of Protocol Ib we need to find the probability that the error detection procedure fails at each step $k$. We exploit the idea of only error detecting for the errors, followed by the decoding of the code subspace to a Hilbert space of one qubit, in order to get improved thresholds~\cite{kitaev1995quantum}. An instance of the circuit used for error detection at each step $k$ of Protocol Ib (Fig.~(\ref{protocolIbcircuit})) is given for $m=4$ in Fig.~(\ref{teleport_FT}).

For Protocol Ib, unlike in magic state distillation in quantum computing (for more details see~\cite{fujii2015quantum}), the circuit of Fig.~(\ref{teleport_FT}) is applied on a physical level. 
 In Protocol Ib errors only enter through the $R_z (\phi)$ gates and of the form of Eqn.~(\ref{eq:noise}).
Rejections after the syndrome measurements can happen either because of noise or because of the non-transversal effects analysed in Sec.~(\ref{Appendix_disc}). There is no dependency between the two sources of rejection and thus we restrict our analysis here to rejections due to noise.

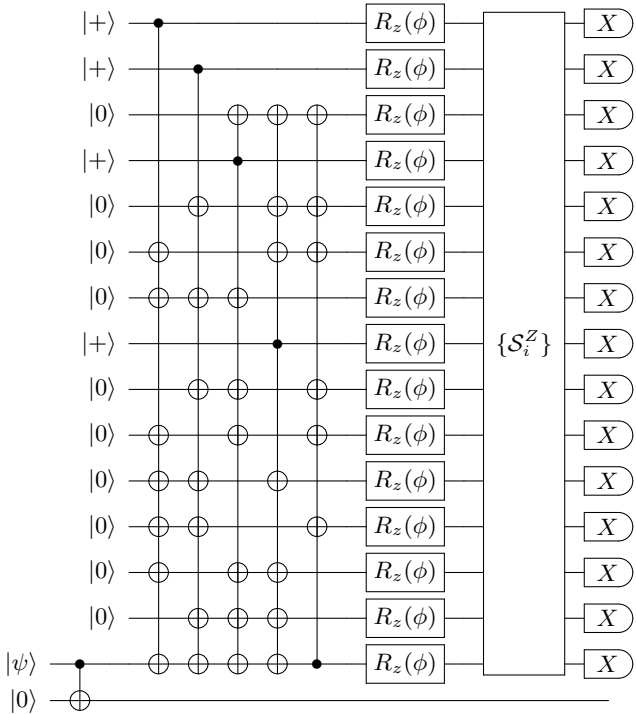
\begin{figure}[t]
\mbox{
\Qcircuit  @C=0.8em @R=0.3em {
&	&	& \lstick{\ket{+}} &\ctrl{14}	& \qw & \qw & \qw & \qw & \qw & \gate{R_z (\phi)}& \qw & \multigate{14}{\{ \mathcal{S}^Z_i \}} & \measureD{X} \\
&	&	& \lstick{\ket{+}} & \qw		& \ctrl{13} & \qw & \qw & \qw & \qw & \gate{R_z (\phi)}& \qw &  \ghost{\{ \mathcal{S}^Z_i \}} & \measureD{X} \\
&	&	& \lstick{\ket{0}} & \qw		& \qw & \targ & \targ & \targ & \qw & \gate{R_z (\phi)}& \qw &  \ghost{\{ \mathcal{S}^Z_i \}} & \measureD{X} \\
&	&	& \lstick{\ket{+}} & \qw		& \qw & \ctrl{-1}  \qwx[11] & \qw & \qw & \qw &\gate{R_z (\phi)}& \qw  &  \ghost{\{ \mathcal{S}^Z_i \}} & \measureD{X} \\
&	&	& \lstick{\ket{0}} & \qw		& \targ & \qw & \targ & \targ & \qw &\gate{R_z (\phi)}& \qw &  \ghost{\{ \mathcal{S}^Z_i \}} & \measureD{X} \\
&	&	& \lstick{\ket{0}} & \targ		& \qw & \qw & \targ & \targ & \qw &\gate{R_z (\phi)}& \qw &  \ghost{\{ \mathcal{S}^Z_i \}} & \measureD{X} \\
&	&	& \lstick{\ket{0}} & \targ		& \targ & \targ & \qw & \qw & \qw &\gate{R_z (\phi)}& \qw &  \ghost{\{ \mathcal{S}^Z_i \}} & \measureD{X} \\
&	&	& \lstick{\ket{+}} & \qw		& \qw & \qw & \ctrl{-5} \qwx[7] & \qw & \qw &\gate{R_z (\phi)}& \qw &  \ghost{\{ \mathcal{S}^Z_i \}} & \measureD{X} \\
&	&	& \lstick{\ket{0}} & \qw		& \targ & \targ & \qw & \targ & \qw &\gate{R_z (\phi)}& \qw &  \ghost{\{ \mathcal{S}^Z_i \}} & \measureD{X} \\
&	&	& \lstick{\ket{0}} & \targ		& \qw & \targ & \qw & \targ & \qw &\gate{R_z (\phi)}& \qw &  \ghost{\{ \mathcal{S}^Z_i \}}& \measureD{X} \\
&	&	& \lstick{\ket{0}} & \targ		& \targ & \qw & \targ & \qw & \qw &\gate{R_z (\phi)}& \qw &  \ghost{\{ \mathcal{S}^Z_i \}}& \measureD{X} \\
&	&	& \lstick{\ket{0}} & \targ		& \targ & \qw & \qw & \targ & \qw &\gate{R_z (\phi)}& \qw &  \ghost{\{ \mathcal{S}^Z_i \}}& \measureD{X} \\
&	&	& \lstick{\ket{0}} & \targ		& \qw & \targ & \targ & \qw &\qw & \gate{R_z (\phi)}& \qw &  \ghost{\{ \mathcal{S}^Z_i \}} & \measureD{X} \\
&	&	& \lstick{\ket{0}} & \qw		& \targ & \targ & \targ & \qw & \qw &\gate{R_z (\phi)}& \qw &  \ghost{\{ \mathcal{S}^Z_i \}} & \measureD{X} \\
 \lstick{\ket{\psi}} &\ctrl{1}	& \qw & \qw & \targ		& \targ & \targ & \targ & \ctrl{-12} & \qw &\gate{R_z (\phi)}& \qw &  \ghost{\{ \mathcal{S}^Z_i \}} & \measureD{X} \\
 \lstick{\ket{0}} & \targ	& \qw & \qw &\qw  & \qw & \qw		& \qw & \qw & \qw & \qw & \qw  & \qw  & \qw 
}
}
\caption{FT application of  transversal $R_z (\phi)$ using QRM($1,4$) and teleportation onto input state $\ket{\psi}$.}
\label{teleport_FT}
\end{figure}

Failure comes when the logical outcome of the $X$ measurement is flipped in the case  of no syndrome error is being detected. The failure probabilities at the syndrome detection for $X$ or $Z$ errors, $p_{\text{err}}^X$ and $p_{\text{err}}^Z$ respectively, are

\be
p_{\text{err}}^{X} = p(\text{error}|\text{X pass}) = \frac{p(\text{error},\text{X pass})}{p(\text{X pass})}  \nonumber 
\ee

and similarly,

\be
p_{\text{err}}^{Z} = p(\text{error}|\text{Z pass}) = \frac{p(\text{error},\text{Z pass})}{p(\text{Z pass})} . \nonumber 
\ee

First we focus on the stabilizers that detect the Pauli $X$ errors. These correspond to the rows of the parity check matrix $H_{z}$ of the RM$^*$ code. The undetected noise operators correspond to the codewords of the RM$^*$ code, including noiseless case which corresponds to $\boldsymbol{0}$, given by $V_{H_z}^{\perp}$. Thus 

\begin{equation}
\label{eqn:x_pass}
p(X \text{ pass})=W_{V_{H_z}^{\perp}}(1-p,p),
\end{equation}

where $W_{V}(x,y) = \sum_{c \in V} x^{n-\text{wt}(c)} y^{\text{wt}(c)}$ is the weight polynomial of $V \in \text{GF}(2^n)$ and $\text{wt}(c)$ is the number of ones in the codeword $c$. We can the write the probability of retransmission due to Pauli $X$ noise as 
$ p^X_{\mathrm n} = 1 - W_{V_{H_z}^{\perp}}(1-p,p).$

The above undetected operators could potentially corrupt the logical $X$ measurement if they happen either before  or during the  application of $R_z(\phi)$ signal. To understand this we represent the signal plus noise operation as $R_z(\theta) X R_z(\phi -\theta)$ for some angle $\theta \leq \phi$. Up to global phase this is equal to $R_z(2 \theta) X R_z(\phi)$, thus equal to the original signal plus a Pauli $X$ that has no effect on the logical $X$ measurement, plus an extra term $R_z(2 \theta)$ that can corrupt the logical $X$ measurement when $\theta \neq 0$. From discretization of errors (\cite{nielsen2010quantum}, Theorem 10.2) and the fact that QRMCs can recover from $Z$ noise, these non-Pauli errors $R_z(2 \theta)$ are detected by the $X$ stabilizer measurements unless they correspond to codewords of the Hamming code, and the latter corrupt the logical measurement only when they have odd parity. Since the weights of the codewords that are excluded by these refinements are large, their contribution in the error probability is negligible and therefore we can include in our calculation all codewords of the RM$^*$ code except identity. Therefore, 
\be
p^X_{\text{err}}= \frac{W_{V_{H_z}^{\perp}}(1-p,p) - (1-p)^{2^m-1}}{W_{V_{H_z}^{\perp}}(1-p,p)} 
\ee

Using the codeword weights of $RM^*$ from  Appendix~\ref{appendix_ft}, we obtain 
\begin{widetext}
\be
\label{eqn:x_detection}
p^X_{\text{err}} = \frac{(2^m-1)(1-p)^{2^{m-1}}p^{2^{m-1}-1}+(2^m-1)(1-p)^{2^{m-1}-1}p^{2^{m-1}} +p^{2^m-1}}{(1-p)^{2^m-1}+(2^m-1)(1-p)^{2^{m-1}}p^{2^{m-1}-1}+(2^m-1)(1-p)^{2^{m-1}-1}p^{2^{m-1}} +p^{2^m-1}}.
\ee
\end{widetext}
The results for bit $j$ are obtained by setting $m=j+2$.
Given the form of noise of Eqn.~(\ref{eq:noise}) and that $X$ error detection is made first, the single qubit $X$ error probability is $p(p_x +p_y)$. However, since the function in Eqn.~(\ref{eqn:x_detection}) is  monotonically increasing in $p$, we can replace $p(p_x +p_y)$ by $p$ and get an upper bound $\forall ~p_x,$ $p_y$.

The stabilizers that detect the Pauli $Z$ errors correspond to the rows of the parity check matrix $H_x$ of the dual of the $\overline{RM}$ code, which is the Hamming code $(2^m-1,2^m-1-m,3)$. The undetected noise operators correspond to the codewords of the Hamming code, including noiseless case which corresponds to $\mathbf{0}$, given by $V_{H_x}^{\perp}$. Thus

\begin{equation}
\label{eqn:z_pass}
p(Z \text{ pass})=W_{V_{H_x}^{\perp}}(1-p,p)
\end{equation}

 and the probability of retransmission due to Pauli $Z$ noise is $p^Z_{\mathrm n} = 1 - W_{V_{H_x}^{\perp}}(1-p,p).$

The subset of undetected operators that lead to an error in the logical $X$ measurement are those which anti-commute with the tensor product of $X$ operators: the ones with odd parity.
From duality, the parity matrix $H_{\overline{RM}}$ of the $\overline{RM}$ code is the generator of the codewords of the Hamming code. 
The subset of odd codewords is obtained by complementing the code  generated by the parity check matrix $H_z$ of RM$^*$, which is the same as $H_{\overline{RM}}$ without the $\mathbf{1}$ row, thus keeping only its even generators. Thus
\be
p^Z_{\text{err}}= \frac{W_{V_{H_z}}(p,1-p)}{W_{V_{H_x}^{\perp}}(1-p,p)},
\ee

Using the MacWilliams identity $W_V(x,y)=\frac{1}{|V|}W_{V^{\perp}} (x+y,x-y),$ we obtain
\be
p^Z_{\text{err}}= \frac{|V_{H_x}| W_{V_{H_z}^{\perp}}(1,2p-1)}{|V_{H_z}|W_{V_{H_x}}(1,1-2p)}.
\ee
Using the codeword weights of $RM^*$ and $\overline{RM}$ from  Appendix~\ref{appendix_ft} and $|V_{H_x}|/|V_{H_z}|=2^m/2^{m+1}=1/2$, we obtain 

\begin{widetext}
\be
\label{eqn:z_detection}
p^Z_{\text{err}} = \frac{1+(2^m-1)(2p-1)^{2^{m-1}-1}+(2^m-1)(2p-1)^{2^{m-1}}+(2p-1)^{2^m-1}}{2(1+(2^m-1)(1-2p)^{2^{m-1}})}.
\ee
\end{widetext}

Again, the above is an upper bound on the failure probability due to $Z$ errors, when noise is of the form of Eqn.~(\ref{eq:noise}), for all values of $p_z$.

\section{Noisy devices}
\label{Appendix_noisydev}

\subsection{Protocol Ib: Noisy device thresholds}
\label{Appendix_noisydevIb}

If we assume noisy devices in Protocol Ib, by allowing any device to have noise of the form of Eqn.~(\ref{eq:noise}) with probability $p$ replaced by the device noise probability $p'$, the threshold calculation is different. Failure probabilities of the detection procedure for $X$ and $Z$ errors are denoted by $p'^{X}_{\text{err}}$ and   $p'^{Z}_{\text{err}}$ respectively. These probabilities are given by replacing  probability $p$ by $p+\text{devIb}(p')$ in Eqns.~(\ref{eqn:x_detection}) and~(\ref{eqn:z_detection}) respectively. Probability
$\text{devIb}(p')$ captures the effect of all device noise (except for state preparation and CNOT error which are included separately in the last term on the LHS of Eqn. (\ref{eq:noisydev2})) on one qubit in the detection procedure and is given by
\begin{equation}
\text{devIb}(p') = (c_{\text{e}} + (2^{m}-m-2)+1)p'
\end{equation}
where $c_{\text{e}}$ is the number of points of failure in the non-transversal encoding procedure $E_{\text{QRM}}$ that affect one qubit. The operations in the encoding procedure correspond to the generator matrix of $RM^*(1,m)$. On average, there are approximately $(m +1) 2^{m - 1} / (2^m - 1)$ points per qubit where the entangling operations apply on the particular qubit.
Since each entangling operation in the coding involves approximately $2^{m - 1}$ qubits, we have
\begin{equation}
\langle c_{\text{e}} \rangle \approx \frac{(m + 1) 2^{m - 1}}{2^m -1} 2^{m - 1}.
\end{equation}
For our protocol, we need to set $m=j+2$ in the previous two equations.

The failure probability at the output of Protocol Ib with device noise is bounded away from one by the joint probability that in all of the $2^{j-1}$ rounds, both the state preparation and CNOT are correct and the detection procedure does not fail. 
The latter joint probability can be written as the product of the probability of correct state preparation/CNOT and the probability of detection not failing conditioned on correct state preparation/CNOT. The points of failure for state preparation/CNOT are $3 \times 2^{j-1}+2$. This includes initial probe preparation and Hadamard, as well as ancilla preparation and CNOT ($2$ points of failure) at each interrogation step. Notice that performing the teleportation correction can be avoided by updating the Pauli frame.  Then the threshold equation becomes
\be
1-(1-p'^{X}_{\text{err}})^{2^{j-1}}(1-p'^{Z}_{\text{err}})^{2^{j-1}}(1-p')^{3 \times 2^{j-1} +2} = \delta(\gamma')
\label{eq:noisydev2}
\ee
Since Eqn.~(\ref{eq:noisydev2}) involves two variables $p,p',$ there is no unique solution but rather a relation for the two thresholds - $p_{\text{th}}$ and $p'_{\text{th}}$ - as depicted in Fig.~(\ref{fig:threshold_relations}).

\subsection{Protocol Ic: Why code switching?}
\label{Appendix_why}

In Protocol Ic we combat noise that can enter at any stage of the phase estimation protocol, in interrogating the field, as well as probe and ancilla preparation, entangling gates and measurements.

As in quantum computing, we need to employ some extra encoding throughout the protocol. If we use transversal quantum codes, the same encoding cannot be used everywhere since there is no quantum code transversal for a universal set of gates~\cite{eastin2009restrictions}. Two techniques are known to solve this issue: gate (or state) distillation and code switching. First we explain why the first technique is prohibitively expensive in terms of our resources for phase estimation.

\paragraph*{Gate distillation:} Everything is performed on an underlying quantum error correcting code which is transversal only for Clifford operations (e.g. QRM($1,3$), also known as the Steane  code). The non-Clifford operations are performed by injecting into this code special states, sometimes called magic states, and  then applying a distillation procedure using a higher order QRMC to reduce their noise~\cite{bravyi2005universal}.

In our case, the non-Clifford part of the computation is the $R_z(\phi)$ rotation. In metrology however $\phi$ is unknown. Similarly to Ref.~\cite{bravyi2005universal}, we could inject a state on which the $R_z(\phi)$ rotation has been applied and teleport it into the rest of distillation circuit using the teleportation circuit of Fig.~(\ref{teleport1}). The distillation would then proceed accounting for discretisation effects as described in Sec.~(\ref{Appendix_disc}). However, in order for teleportation to succeed, after the logical measurement of the first qubit a logical correction on the second needs to be applied
\be
R_z(\phi) X R_z^{\dagger}(\phi) \propto R_z(2\phi) X,
\ee
where proportionality captures an irrelevant global phase.

In quantum computing, commonly $\phi = \pi/2^n$ and $R_z(\phi)$ belongs to the $n$-th level of the Clifford hierarchy. Then, $R_z(2\phi)$ belongs to the $(n-1)$--th level and thus injecting, distilling and teleporting more  magic states to implement the corrections is a terminating process, with number of steps depending on $n$ (see Refs.~\cite{landahl2013complex,campbell2016efficient} for more details). 
For metrology $\phi \in [0,\pi]$ and therefore a similar procedure is not guaranteed to terminate. This, on its own, is not a major issue since we could postselect on measuring $0$ after a $k$ consecutive teleportations with the probability of $1$ being exponentially small on $k$ (teleportation measurements are unbiased). The problem is that distilling a $R_z(2^k \phi)$ rotation, for unknown $\phi$, means interrogating the field with the same state $2^k$ times which will introduce noise of strength $2^k p$. Even for $k=2$, the thresholds we have calculated for the field noise (Protocol Ib) will be worse than the non-FT case (Protocol Ia). Thus, the unknown nature of the rotation, which necessitates using the same field multiple times for the teleportation corrections, means that gate distillation is not giving an benefit over the non-FT protocol.

One could avoid any correction by applying post-selection on the very first teleportation step. This leads the failure probability of one $R_z(\phi)$ application in the distillation circuit to be $1/2$. Since the distillation circuit uses QRMC of block sizes $2^{j+2}-1$ the failure probability of  transversal application on $R_z(\phi)$ on the block is $1-(1/2)^{2^{j+2}-1}$. For $2^{j-1}$ interrogations this amounts to $1-(1/2)^{2^{j-1}(2^{j+2}-1)},$  adding an extra double exponential term in the resource count $C(j)$ from the code. This would be prohibitive.

\paragraph*{Code switching:} We thus resort to the alternative technique of code switching~\cite{1367-2630-17-8-083002,anderson2014fault}. Here, the state is encoded throughout the protocol with a quantum code but not the same at every stage. Code switching exploits the fact that different members of QRMCs are transversal for different gate sets and one can switch between those codes using ancilla qubits and FT measurements. In Protocol Ic we start with a state $\ket{0}$ encoded (by means of FT measurements) by the Steane code and fault tolerantly apply a Hadamard gate in order to prepare the $\ket{+_L}$ probe state. Then we switch to the QRM($1,m$) for $m=j+2$ on which we apply the rest of the protocol.

The circuit applied for each interrogation, Fig.~(\ref{protocolIccircuit}), is similar to that of Protocol Ib (Fig.~(\ref{protocolIbcircuit})). The difference is that the input state $\ket{\psi}$ is already encoded with the required QRMC and therefore the non-transversal operation $E_{\text{QRM}}$ is not needed.  The state is entangled by means of a transversal CNOT gate with the ancilla qubit which is also fault-tolerantly encoded with the same QRM($1,m$) code. 
At every step we apply FT syndrome measurements and recovery operations in the same fashion it is applied in quantum computing \cite{nielsen2010quantum}, the failure probability of which is given in Sec.~(\ref{appendix_ec}). The overhead that comes from the QRM encoding and switching is not counted since we count as resource the number of uses of the field, which are the same as in Protocol Ib.

\subsection{Protocol Ic: Noisy device thresholds}
\label{Appendix_noisydevIc}

Similarly to Sec.~(\ref{Appendix_noisydevIb}), we calculate how the noise in devices affects the error thresholds of Protocol Ic. There are two differences from Protocol Ib. First, the encoding procedure for the QRMCs is now done during the preparation of the probe and ancillae and is fault tolerant. Second, after every operation a round of fault tolerant error correction is applied. The failure probability of the error correction procedure is denoted by $p_{\mathrm{EC}}$ and given in Sec.~(\ref{appendix_ec}). The failure probabilities of the detection procedure are denoted $p''^{X}_{\text{err}}$ and   $p''^{Z}_{\text{err}}$ and given by replacing  probability $p$ by $p+\text{devIc}(p')$ in Eqns.~(\ref{eqn:x_detection}) and~(\ref{eqn:z_detection}) respectively, where
\begin{equation}
\text{devIc}(p')  = (3 \times 2 m + 1 + (2^{m} - m -2) + 1) p'.
\end{equation}
This includes the errors on one qubit from previous syndrome measurements and recovery plus the errors in the error detection syndrome measurements. For our protocol, we need to set $m=j+2$ in the above equation.

Now, the number of FT measurement and recovery steps are $3 \times 2^{j-1} + j + 1$. This includes FT probe preparation, FT Hadamard and $j-1$ steps of switching to QRM$(1,j+2)$, as well as FT ancilla preparation and FT CNOT (two steps)  at each interrogation step. We conservatively approximate the success probability of FT probe preparation, FT Hadamard and each FT switching step by the success probability of FT measurement and recovery step of QRM($1,j+2$). Then the threshold equation becomes
\be
1-(1-p''^{X}_{\text{err}})^{2^{j-1}}(1-p''^{Z}_{\text{err}})^{2^{j-1}}(1-p_{\mathrm{EC}})^{3 \times 2^{j-1}+j+1} = \delta(\gamma').
\label{eq:noisydev3}
\ee
The solution involves two variables and is depicted in Fig.~(\ref{fig:threshold_relations}). We observe that the range of values of $p'_{\text{th}}$ in which $p_{\text{th}}$ is improved over Protocol Ia is smaller than in Protocol Ib with device noise, but, within this region, there is a sub-region where Procotol Ic gives higher thresholds than Protocol Ib. This improvement however is small and the reason for this is the large amount of operations involved in QRMCs error correction.

\subsection{Failure probabilities of QRMCs as error-correcting codes}
\label{appendix_ec}

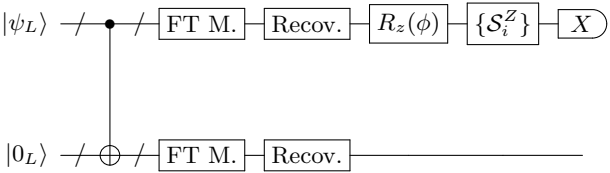
\begin{figure}[t]	
	\mbox{
\Qcircuit  @C=0.8em @R=3.9em {
		& \lstick{\ket{\psi_L}}&  {/} \qw	& \ctrl{1}		& {/} \qw	&\gate{\text{FT M.}}	&  \gate{\text{Recov.}} & \gate{R_z(\phi)}	&  \gate{\{ \mathcal{S}_i^{Z} \}}	&  \measureD{X}		& 		& 	 \\
			& \lstick{\ket{0_L}}& {/} \qw		&\targ	& {/} \qw		&  \gate{\text{FT M.}}	&  \gate{\text{Recov.}} &  \qw   & \qw &  \qw    }
			}
		\caption{Protocol Ic circuit. Operations CNOT, $R_z(\phi)$ and  $X$ measurement are all transversal. Operations $\{ \mathcal{S}_i^{Z} \}$ represent all $Z$ stabilizer measurements. Fault tolerant measurements and recovery require extra ancillae and correct up to $1$ error.}\label{protocolIccircuit}
\end{figure}

To analyse the thresholds of Protocol Ic we calculate the failure probability of the error correction procedure using QRMCs.

Since QRMCs can correct one error of any type, the noise threshold comes from the probability of having two or more errors during all possible operations between two rounds of fault tolerance.
The approximate thresholds for QRM($1,3$) are provided in Ref.~\cite{nielsen2010quantum}. We follow the same techniques to calculate approximate thresholds for QRM($1,m$) for $m>3$.

We begin by enumerating the combinations leading to two errors at the output. We consider the FT measurement and recovery operation on the first logical qubit immediately after the application of transversal CNOT in Fig.~(\ref{protocolIccircuit}).  
The number of ways two errors can occur at the output of the first logical qubit are listed below.
\begin{enumerate}[(i)]
\item Two errors at the previous syndrome measurement and recovery operations. Since there are two blocks with $c_0 = 3 \times 2 \times m \times 2^{m-1} + 2^{m}-1$ points of failure in each, this number is $c_0^2$. 
\item One error at the previous syndrome measurement and recovery operations at one of the two blocks, and another during the logical two qubit gate. This number is $2 c_0 (2^m -1)$.
\item Both during the logical two qubit gate. This number is ${2 (2^m-1) \choose 2}$.
\item Two errors due to incorrect syndrome measurement. This number is $( 2 m) {2 \times 2^{m-1}) \choose 2}$. 
\item Both at the syndrome measurements: $c_0^2$.
\item One at the syndrome measurement and another during recovery: $c_0 (2^m-1)$.
\item Both during recovery: $(2^m-1)^2$.
\end{enumerate}
Summing all the above contributions, we get
\begin{eqnarray}
c &=& 2 c_0^2  + {{2 (2^m - 1)}\choose{2}} +(2 m) {2^m \choose 2} +  3c_0(2^m - 1 )    \nonumber \\ 
 && + (2^m - 1)^2.  \nonumber
\end{eqnarray}
The  probability of failure of the error correction procedure, which is the probability of having at least two errors is then
\begin{equation}
p_{\mathrm{EC}} \approx c p'^2,
\end{equation}
where $p'$ is the probability of a single component in the device being affected by noise.  

\section{Parallel protocols}
\label{appendix_parallel}

A parallel version of our protocols Ia, Ib, and Ic can be implemented by preparing GHZ states of $2^{j-1}$ entangled qubits and interrogating the field in parallel, as depicted in Fig.~(\ref{withoutFT}).
The performance of this parallel version for Protocols Ia, Ib without device noise is identical to the serial versions. 
The contribution of noisy devices in Protocol Ib and Ic is different due to a different preparation and measurement procedures compared to the serial protocol.
Since both the serial and the parallel version require the application of Hadamard gate, we need in both cases to complement the QRM($1,j+2$) with a code transversal for the $H$ gate.

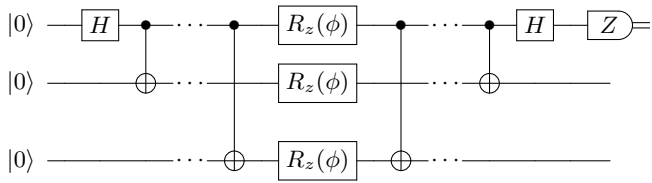
\begin{figure}[h!]
\centering
\Qcircuit  @C=0.7em @R=0.8em {
	&	&	& \lstick{\ket{0}}& \qw	& \gate{H} & \ctrl{1} & \qw &  {\cdots} & &  \ctrl{3}	& \qw & \gate{R_z(\phi)} &\qw & \ctrl{3} & \qw &  {\cdots} & & \ctrl{1} &  \gate{H} & \qw &  \measureD{Z}		& \cw		&  \\
	&	&	& \lstick{\ket{0}}& \qw	& \qw	& \targ	& \qw &  {\cdots} & & \qw	&\qw & \gate{R_z(\phi)} &\qw & \qw	& \qw &  {\cdots} & & \targ &  \qw		&  \qw & \qw \\
	\\
	&	&	& \lstick{\ket{0
	}}& \qw	& \qw	& \qw	& \qw &  {\cdots} & & \targ	&\qw & \gate{R_z(\phi)} &\qw & \targ & \qw &  {\cdots} & & \qw		&  \qw & \qw &  \qw	}
\caption{Parallel phase estimation without fault tolerance}
\label{withoutFT}
\end{figure}

\section{discussions}
\label{sec:disc}

We have illustrated a methodology for FT quantum metrology that allows estimation of phase up to higher bits of precision in the presence of arbitrary local Pauli noise. This is based on improved noise thresholds for our phase estimation scheme.   
While we have focussed on the principle of FT quantum phase estimation, its practical use will depend on reducing resource consumption and increasing thresholds improvements. This should direct future work by calculating fault tolerance thresholds and resources for other known schemes, both non-adaptive~\cite{kitaev1995quantum, higgins2009demonstrating} as well as adaptive~\cite{griffiths1996semiclassical,
cleve1998quantum,nielsen2010quantum,
higgins2009demonstrating}.

Noise thresholds have been identified~\cite{kimmel2015robust} for non-adaptive phase estimation schemes~\cite{higgins2009demonstrating} under general additive noise and establish a noise threshold for a modified version of it. While these works do not use QECC or fault tolerance, they do possess thresholds better than ours. This is due to the sophistication of the estimation scheme, and its fault tolerance would therefore be an interesting open question.

Further improvements in FT quantum metrology should be possible with better estimation schemes as well as the quantum error correcting codes, the latter determined by the transversality demands set by the unknown parameter(s) to be estimated. These should spur developments not only in quantum metrology but also quantum error correction and fault tolerance. 

\section{Acknowledgements}

We thank T. Rudolph for communications about Ref.~\cite{rudolph2003quantum}, B. Terhal for pointing to Ref.~\cite{kimmel2015robust}, D. Branford for technical discussions, J. Friel for commenting on the manuscript and S. Ferracin for graphics assistance. This work was partly supported by the UK EPSRC (EP/K04057X/2), and the UK National Quantum Technologies Programme (EP/M013243/1,EP/M01326X/1).

\appendix

\section{Quantum error correction}
\label{appendix_ft}

It is known~\cite{jochym2017disjointness} that transversal gates on stabilizer codes are necessarily at a finite level of the Clifford hierarchy~\footnote{Let $\mathcal{P}$ be the group of Pauli operators. The first level of the Clifford hierarchy $\mathcal{C}_1$ is the normalizer of the Pauli group under conjugation. Then, the $n$-th level of the Clifford hierarchy $\mathcal{C}_n$ is the set of operators that map the Pauli group to the $(n-1)$-th level of the hierarchy under conjugation. A rotation operator $\text{diag}(1,e^{i 2 \pi/2^n})$ belongs to the $(n-1)$-th level of the Clifford hierarchy. Thus, a transversal rotation by a real angle can potentially corrupt the logical space of a stabilizer code. More in Appendix~\ref{appendix_ft} and Ref.~\cite{jochym2017disjointness}}. This is based on the notion of disjointness, which is a metric of stabilizer quantum error-correcting codes and  is, roughly speaking, the number of mostly non-overlapping representatives of any given non-trivial logical Pauli operator.

\begin{theorem} [Theorem $5$ in \cite{jochym2017disjointness}]
Consider a stabilizer code with min-distance $d_\downarrow$, max-distance $d_\uparrow$ and disjointness $\Delta$. If $M$ is an integer satisfying
$$d_\uparrow < d_\downarrow \Delta^{M-1},$$
then all transversal logical operators are in the $M^{\text{th}}$ level of the Clifford hierarchy $\mathcal{C}_M$.
\end{theorem}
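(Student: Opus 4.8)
The plan is to derive the theorem from a single structural lemma about transversal operators, iterated against the recursive definition of the Clifford hierarchy. Throughout, by a transversal operator I mean a tensor product $\bigotimes_i U_i$ of single-qubit unitaries that preserves the code space (a qubit permutation may be present but is harmless to everything below), and I call the set of qubits on which $U_i$ is not a Pauli (up to phase) its \emph{non-Pauli core}. The key elementary observation is that conjugating a transversal operator $Z$ by a Pauli representative $P$ of a logical Pauli keeps the result transversal, since $Z P Z^\dagger = \bigotimes_i (Z_i P_i Z_i^\dagger)$, and its non-Pauli core shrinks: it is contained in $\mathrm{supp}(P)\cap(\text{core of }Z)$, because on qubits outside $\mathrm{supp}(P)$ the factor is the identity, and on qubits outside the core of $Z$ the factor $Z_i P_i Z_i^\dagger$ is again a Pauli. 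Also, $ZPZ^\dagger$ again preserves the code space, since $Z$ and $P$ both do.

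First I would prove the \textbf{auxiliary lemma}: if $Z$ is transversal, preserves the code space, and its non-Pauli core $R$ satisfies $|R| < d_\downarrow \Delta^{j}$, then $\bar Z \in \mathcal{C}_{j}$ — by induction on $j$. Base case $j=0$: $|R|<d_\downarrow$ means $R$ supports no logical operator, so by the cleaning lemma for stabilizer codes every logical Pauli $\bar P$ has a representative $P$ with $\mathrm{supp}(P)\cap R=\varnothing$; then $ZPZ^\dagger=\pm P$ by the core identity above, so $\bar Z$ sends every logical Pauli to $\pm$ itself and is therefore itself a logical Pauli, i.e. $\bar Z\in\mathcal{C}_0$. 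Inductive step: given $|R|<d_\downarrow\Delta^{j}$ and an arbitrary nontrivial logical Pauli $\bar P$, I would invoke the definition of the disjointness $\Delta$ — there is a family of $k$ representatives of $\bar P$ that is $c$-disjoint with $k/c\ge\Delta$. Since each qubit of $R$ lies in at most $c$ members of the family, the total incidence of the family with $R$ is at most $c|R|$, so by averaging some representative $P$ meets $R$ in at most $c|R|/k\le|R|/\Delta<d_\downarrow\Delta^{j-1}$ qubits. Then $ZPZ^\dagger$ is transversal, preserves the code space, and has non-Pauli core of size $<d_\downarrow\Delta^{j-1}$, so the inductive hypothesis gives $\bar Z\bar P\bar Z^\dagger=\overline{ZPZ^\dagger}\in\mathcal{C}_{j-1}$. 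As this holds for every logical Pauli $\bar P$, the defining property of the Clifford hierarchy yields $\bar Z\in\mathcal{C}_{j}$.

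With the lemma in hand the \textbf{theorem} follows in one step. Let $U$ be any transversal logical operator; its non-Pauli core may be all $n$ qubits, so the lemma does not apply to $U$ itself, but it applies after one conjugation. For a logical Pauli $\bar P$, pick a minimum-weight representative $P$; by the definition of the max-distance $d_\uparrow$, $|\mathrm{supp}(P)|\le d_\uparrow$. Then $UPU^\dagger$ is transversal, preserves the code space, and has non-Pauli core contained in $\mathrm{supp}(P)$, of size $\le d_\uparrow<d_\downarrow\Delta^{M-1}$ by hypothesis. Applying the auxiliary lemma with $j=M-1$ gives $\bar U\bar P\bar U^\dagger\in\mathcal{C}_{M-1}$, and since this holds for all logical Paulis $\bar P$ we conclude $\bar U\in\mathcal{C}_M$.

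The step I expect to be the main obstacle is making the disjointness bookkeeping in the inductive step airtight: one must use the precise definition of $\Delta$ (a fractional covering quantity built from colorings of the set of Pauli representatives of a logical class) and check it is exactly the parameter for which the averaging argument gives a representative of any nontrivial logical Pauli meeting a prescribed region $R$ in at most $|R|/\Delta$ qubits — in particular that the code disjointness is a valid lower bound for every individual nontrivial logical-Pauli class. A secondary technical point is the cleaning lemma invoked in the base case, namely that a region of weight below the minimum distance is correctable in the strong sense that every logical operator admits a representative avoiding it; this is standard for stabilizer codes. Everything else — transversality and code-space preservation being stable under conjugation, the core-shrinking identity, and unwinding the hierarchy definition $M$ times — is routine.
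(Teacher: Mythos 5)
The paper does not actually prove this statement: it is quoted verbatim as Theorem~5 of the disjointness paper of Jochym-O'Connor, Kubica and Yoder, so there is no in-paper proof to compare against. Measured against the source, your argument is essentially the right one and is structurally the same: a single conjugation by a minimum-weight representative to shrink the ``non-trivial part'' of the transversal operator from $n$ qubits down to $d_\uparrow$, followed by an induction in which the disjointness parameter, via exactly the averaging argument you describe, supplies for each nontrivial logical Pauli class a representative meeting the current region in at most a $1/\Delta$ fraction of its qubits, with the cleaning lemma closing the base case. Your bookkeeping of the ``non-Pauli core'' under conjugation is correct (single-qubit Paulis conjugate single-qubit Paulis to $\pm$ themselves), the code disjointness is indeed defined as a minimum over nontrivial logical classes so the averaging step applies to every class, and the trivial logical class is handled trivially. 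All of the steps you flag as the load-bearing ones are the right ones, and they go through.

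The one substantive discrepancy is an off-by-one in the hierarchy index, and whether it is a gap depends on a convention the paper itself leaves ambiguous. Your induction bottoms out at ``core smaller than $d_\downarrow$ implies logical Pauli,'' so a core below $d_\downarrow\Delta^{j}$ lands $j$ levels \emph{above} the Pauli group and the theorem lands $M$ levels above it. The source instead works with the group commutator $UPU^\dagger P^{-1}$ and uses the Knill--Laflamme fact that a code-space-preserving operator supported on a correctable region acts as a \emph{scalar} (one level below the Paulis); this buys one extra level, so that $d_\uparrow<d_\downarrow\Delta^{M-1}$ already yields membership $M-1$ levels above the Pauli group, i.e.\ $\mathcal{C}_M$ in the source's convention where $\mathcal{C}_1$ is the Pauli group. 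Under the convention of this paper's own footnote ($\mathcal{C}_1$ the normalizer of the Pauli group, hence Paulis at level $0$), your conclusion coincides exactly with the quoted statement; under the source's convention it falls one level short. If you want the sharper form, replace the base case ``$ZPZ^\dagger=\pm P$ hence $\bar Z$ is Pauli'' by ``$ZPZ^\dagger P^{-1}$ is supported on a correctable region and preserves the code space, hence acts as a phase,'' and run the same induction on the commutator; everything else in your write-up survives unchanged.
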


This theorem implies that in our construction for FT metrology, we cannot hope to use a stabilizer code that is transversal for any gate $R_z(\phi)$ for $\phi \in \mathbb{R}$.

\emph{Reed-Muller codes:} 
Reed-Muller codes RM($r,m$) of block length $n=2^m$, for $0 \leq r \leq m$, dimension $\sum_{i=0}^{r} {{m} \choose {i}}$ and distance $2^{m-r}$ are a family of classical block codes~\cite{macwilliams1977theory}. Reed-Muller codes have geometric properties that allow for easy decoding. Codewords of RM($r,m$) correspond to all Boolean functions $f$ of $m$ variables of degree $r$. Each codeword is the last column of the truth table of $f$, i.e. the values of $f$ for all different inputs. For example, the rows of the generator matrix of RM($1,3$) contain the values of $a_0 \boldsymbol{1} + a_1 x_1 + a_2 x_2 + a_3 x_3$, where $\boldsymbol{1}$ stands for the vector of all ones, for all $x_i$'s and each row corresponds to a different element of a basis on $a_i$'s:
$$ G = \begin{bmatrix}
   0  & 0   & 0 & 0 & 1 & 1 & 1 & 1 \\
    0  & 0   & 1 & 1 & 0 & 0 & 1 & 1  \\
    0  & 1   & 0 & 1 & 0 & 1 & 0 & 1  \\
    1  & 1   & 1 & 1 & 1 & 1 & 1 & 1 
\end{bmatrix}.$$

We are interested in the divisibility properties of Reed-Muller codes, which has implications for the transversality of the quantum Reed-Muller codes.  A classical code $C$ is called divisible by $\Delta$ if $\Delta$ divides the weight of all $x \in C$. A code is called divisible if it is divisible by $\Delta >1$.  First order RM codes are divisible by $2^{m-1}$ because exactly half of the outputs of a boolean function of degree $1$ have value $1$, except function $\boldsymbol{1}$ which always has output $1$.

QRMCs use codes constructed from RM codes. We present their divisibility properties and weight distribution. The shortened RM code, denoted by $\overline{RM}$, is taken by keeping only the codewords which begin with $0$ and delete their first coordinate. Codewords of $\overline{RM}(1,m)$  can be defined by the following recursive process.
%Codewords of the even subcode $\overline{RM}$, sometimes called simplex code, are
For $m=2$
$$\mathcal{S}_{2} = \begin{bmatrix} 0 & 0 & 0 \\ 0 & 1 & 1 \\ 1 & 0 & 1 \\ 1 & 1 & 0   \end{bmatrix},$$ 
and for higher values of $m$ 
$$\mathcal{S}_m = \begin{bmatrix}  \mathcal{S}_{m-1} & \boldsymbol{0} & \mathcal{S}_{m-1} \\ \mathcal{S}_{m-1} & \boldsymbol{1} & \overline{\mathcal{S}}_{m-1} \end{bmatrix}.$$
Code $\overline{RM}(1,m)$ therefore has one codeword of weight $0$ and $2^m-1$ of weight $2^{m-1}$.

The punctured Reed-Muller code $RM^*$ is obtained by adding the $\boldsymbol{1}$ row to the generator of $\overline{RM}$. 
$RM^*$ therefore has one codeword of weight $0$, $2^m-1$ of weight $2^{m-1}-1$, $2^m-1$ of weight $2^{m-1}$ and one of weight $2^m-1.$

\emph{Quantum Reed-Muller codes:} 
A quantum Reed-Muller code QRM($1$,$m$) is a CSS code based on classical Reed-Muller codes. It is constructed using the punctured Reed-Muller code $RM^*$ and its even subcode $\overline{RM}$ with logical states
$\ket{x}_L \equiv \sum_{\boldsymbol{y} \in \overline{RM}} \ket{\boldsymbol{y} + x \boldsymbol{1}},$ for $x \in \{0,1\}$. The size of the block is $2^m-1$ qubits. The minimum distance is $3$, which is the minimum distance of the dual of the $\overline{RM}$ that is used to correct the $Z$ errors~\footnote{$RM^*$ code used to detect the $X$ errors and the dual of $\overline{RM}$ code used to detect the $Z$ errors have different distances, a fact exploited by our scheme.}.

Using the following Lemma, we justify our choice of QRM($r,m$) code with $r=1$ and $m$ chosen according to transversality requirements. 

\begin{lemma}[Corollary 4 in \cite{zeng2011transversality}]
Let QRM($r,m$) created by the construction described above, where $0<r\leq \lfloor m/2 \rfloor$. Then it is an $[n=2^m -1, 1, d= \min (2^{m-r}-1, 2^{r+1} -1)]$ code, with transversal $T_t$ for $t=\lfloor m/r \rfloor -1$. 
\end{lemma}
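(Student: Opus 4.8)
\emph{Proof strategy.} The plan is to establish the three assertions of the lemma separately --- the parameters $[[2^m-1,1,\,\cdot\,]]$, the minimum distance $d=\min(2^{m-r}-1,\,2^{r+1}-1)$, and transversality of $T_t$ for $t=\lfloor m/r\rfloor-1$ --- each of which reduces to a classical fact about Reed--Muller codes. Throughout, QRM$(r,m)$ is the CSS code obtained from the construction described above with RM$(1,m)$ replaced by RM$(r,m)$: the $X$-stabilisers generate the shortened even subcode $C_2=\overline{RM(r,m)}$, the $Z$-stabilisers come from the parity check of the punctured code $C_1=RM(r,m)^{*}$, and the logical states are $\ket{x}_L=\sum_{y\in C_2}\ket{y+x\mathbf 1}$, $x\in\{0,1\}$. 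For the parameters I would argue as follows: deleting one coordinate gives $n=2^m-1$; since $\dim RM(r,m)=\sum_{i=0}^{r}\binom{m}{i}$, puncturing at a coordinate on which the code is not identically zero preserves the dimension while shortening there lowers it by one, so $\dim C_1-\dim C_2=1$ and the code encodes $k=1$ logical qubit. The hypothesis $0<r\le\lfloor m/2\rfloor$ is exactly what makes the inclusions $C_2\subset C_1$ (and the dual version) hold with codimension one, so that the CSS construction is well defined.

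For the distance, I would invoke the CSS dictionary: one of the two distances is the minimum distance of the punctured code $RM(r,m)^{*}$, namely $2^{m-r}-1$ (puncturing a support coordinate of a minimum-weight word of RM$(r,m)$, which has distance $2^{m-r}$, lowers it by one), while the other is the minimum distance of $(\overline{RM(r,m)})^{\perp}=(RM(r,m)^{\perp})^{*}=RM(m-r-1,m)^{*}$, namely $2^{m-(m-r-1)}-1=2^{r+1}-1$ (for $r=1$ this is the $[2^m-1,\,2^m-1-m,\,3]$ Hamming code, and $2^{2}-1=3$, consistent with the paper). The code distance is the smaller of the two, giving $d=\min(2^{m-r}-1,\,2^{r+1}-1)$.

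The crux is transversality of $T_t$. Put $\omega=e^{i2\pi/2^{t}}$ and apply $\bigotimes_i T_t$ to $\ket{x}_L=\sum_{y\in C_2}\ket{y+x\mathbf 1}$; the summand $\ket{y+x\mathbf 1}$ picks up the phase $\omega^{\mathrm{wt}(y+x\mathbf 1)}$. For $x=0$ one needs $\mathrm{wt}(y)\equiv 0\pmod{2^{t}}$ for every $y\in C_2$, and for $x=1$ one needs $\mathrm{wt}(y+\mathbf 1)=(2^m-1)-\mathrm{wt}(y)$ to be constant modulo $2^{t}$ over $C_2$ --- both amount to saying that the shortened code $C_2$ and its $\mathbf 1$-coset inside $C_1$ are $2^{t}$-divisible. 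This is supplied by the classical divisibility theorem for Reed--Muller codes (Ax, McEliece): the weights of RM$(r,m)$ are divisible by $2^{\lceil m/r\rceil-1}$, from which one checks that for $t=\lfloor m/r\rfloor-1$ the $2^m-1$ coordinate-phases collapse to at most two distinct values, so that $\ket{0}_L$ is fixed up to a global phase while $\ket{1}_L$ acquires a single uniform extra phase $\omega^{\pm1}$. Hence $\bigotimes_i T_t$ realises logical $T_t^{\pm1}$, i.e. QRM$(r,m)$ is transversal for $T_t$; for $r=1$, $m=n+1$ this recovers the statement used above that transversal $T_n$ on QRM$(1,n+1)$ gives logical $T_n^{\dagger}$.

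The main obstacle is this last step: one must pin down the sharp $2$-divisibility not of RM$(r,m)$ as such but of the specific shortened code $C_2$ and its $\mathbf 1$-coset that appear in the logical states, and confirm that the floor in $t=\lfloor m/r\rfloor-1$ is the largest exponent for which the coordinate-phases collapse to two values (so that the transversal gate is still a logical operator, possibly $T_t^{\dagger}$ rather than $T_t$). Once the CSS correspondence is in hand --- dimension from the two classical dimensions, and the $X/Z$ distances from minimum coset-representative weights --- the parameter and distance claims are routine bookkeeping.
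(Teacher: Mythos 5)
The paper does not actually prove this lemma: it is imported verbatim as Corollary~4 of the cited reference (Zeng \emph{et al.}), and the surrounding appendix only supplies the ingredients (weight distributions of $\overline{RM}$ and $RM^{*}$, the divisibility remark) needed to \emph{use} it. So there is no in-paper proof to compare against; your proposal has to stand on its own. Your treatment of the parameters and the distance is correct and follows the standard CSS dictionary: $n=2^m-1$ from puncturing, $k=\dim RM^{*}-\dim\overline{RM}=1$, the $X$-distance $2^{m-r}-1$ from puncturing a minimum-weight word of $RM(r,m)$, and the $Z$-distance $2^{r+1}-1$ from $\overline{RM}^{\perp}=RM(m-r-1,m)^{*}$. (Two small caveats: the inclusion $\overline{RM}\subset RM^{*}$ with codimension one holds for all $0<r<m$, so the hypothesis $r\le\lfloor m/2\rfloor$ is not "exactly what makes" the CSS construction well defined; and the honest CSS distance is the minimum weight over $C_1\setminus C_2$ and $C_2^{\perp}\setminus C_1^{\perp}$, not over the full codes --- here the two coincide because the minimum-weight words have odd weight, but that deserves a sentence.)

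The genuine gap is where you yourself flag it: the transversality claim. You invoke Ax/McEliece divisibility of $RM(r,m)$ by $2^{\lceil m/r\rceil-1}$, but the lemma asserts transversal $T_t$ only for $t=\lfloor m/r\rfloor-1$, and you never reconcile the ceiling with the floor. More importantly, the relevant objects are not $RM(r,m)$ but the \emph{shortened} code $\overline{RM}$ and its $\mathbf 1$-coset inside $RM^{*}$: puncturing shifts each weight by $0$ or $1$ depending on the deleted coordinate, so divisibility of the parent code does not directly give you "weights of $C_2$ are $\equiv 0$ and weights of $\mathbf 1+C_2$ are constant $\pmod{2^{t}}$", which is the actual condition for $\bigotimes_i T_t$ to act as logical $T_t^{\pm1}$. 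For $r=1$ this is trivial (the shortened weights are exactly $0$ and $2^{m-1}$, as the paper tabulates), but for general $r$ this computation \emph{is} the content of the lemma, and your proposal defers it rather than performing it. As written, the argument is a correct strategy --- indeed the strategy of the cited reference --- but not a complete proof; to close it you would need to establish the congruences $\mathrm{wt}(y)\equiv 0$ and $\mathrm{wt}(y+\mathbf 1)\equiv -1\pmod{2^{\lfloor m/r\rfloor-1}}$ for all $y\in\overline{RM}(r,m)$, e.g.\ by applying Ax's theorem to $f$ and to $f+1$ with the distinguished coordinate handled explicitly.
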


We can thus calculate the failure probability for QRM($r,m$) with $r=1$ and for $r>1$ with a fixed $m/r$ ratio, to have the same transversality property, with an error model where each physical qubit is corrupted with probability $0 \leq p \ll 1$. We calculated the thresholds for $r=2$, using the following theorem:

\begin{theorem}[Theorem $8$, Ch. $15$,  Ref.~\cite{macwilliams1977theory}]
Let $A_i$ be the number of codewords of weight $i$ in RM($2,m$). Then $A_i = 0$ unless $i=2^{m-1}$ or $i=2^{m-1} \pm 2^{m-1-h}$ for some $h$, $0 \leq h \leq \lceil \frac{m}{2} \rceil$. Also, $A_0 = A_{2^m} =1 $ and $A_{2^{m-1} \pm 2^{m-1-h}} = 2^{h(h+1)} \frac{(2^m-1)(2^{m-1}-1)\ldots(2^{m-2h+1}-1)}{(2^{2h}-1)(2^{2h-2}-1)\ldots(2^2-1)}$.
Finally, $A_{2^{m-1}}=2^{1+m+{m \choose 2}} - \sum_{i \neq 2^{m-1}} A_i$.

\end{theorem}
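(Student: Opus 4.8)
The plan is to identify $\mathrm{RM}(2,m)$ with the space of Boolean polynomials in $m$ variables of degree at most $2$, so that a codeword is a function $f(x)=Q(x)+\ell(x)+c$ with $Q$ a homogeneous quadratic form, $\ell$ linear and $c\in\mathbb{F}_2$, its weight being $\#\{x\in\mathbb{F}_2^m : f(x)=1\}$. The entire computation then reduces to the classification of quadratic forms over $\mathbb{F}_2$. First I would recall that to $Q$ one associates the alternating bilinear form $B_Q(x,y)=Q(x+y)+Q(x)+Q(y)$, whose rank is necessarily even, say $2h$, whose radical $R=\{y:B_Q(y,\cdot)=0\}$ has dimension $m-2h$, and which is nondegenerate on a complement $U$ of $R$; over $\mathbb{F}_2$ a nondegenerate quadratic form in $2h$ variables is equivalent to exactly one of two standard types (Arf invariant $0$ or $1$), for which the number of ones is the classical value $2^{2h-1}\mp 2^{h-1}$.

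Next I would compute the weight of a general $f=Q+\ell+c$ using the splitting $\mathbb{F}_2^m=U\oplus R$. Since $B_Q$ vanishes on $R$, the restriction $Q|_R$ is a linear functional, and one checks that $f$ is constant along cosets of $R$ when the linear functional $Q|_R+\ell|_R$ on $R$ is zero, and balanced — $\mathrm{wt}(f)=2^{m-1}$ — when it is nonzero. In the former case, absorbing $\ell|_U$ by a translation (``completing the square'', possible precisely because $B_Q$ is nondegenerate on $U$) reduces $f$ to $Q'+c'$ on $U$, whence $\mathrm{wt}(f)=2^{m-2h}\bigl(2^{2h-1}\pm 2^{h-1}\bigr)=2^{m-1}\pm 2^{m-1-h}$, the sign controlled by the type of $Q$ and by $c'$. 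This already shows the only weights are $2^{m-1}$ and $2^{m-1}\pm 2^{m-1-h}$; the constraint $2h\le m$ bounds $h$, and $h=0$ isolates the two constant functions, giving $A_0=A_{2^m}=1$.

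Finally, the enumeration. Using the standard count (MacWilliams--Sloane, Ch.~15, \S2) of the quadratic forms on $\mathbb{F}_2^m$ whose associated bilinear form has prescribed even rank $2h$ — a Gaussian-binomial expression — together with, for each such form, the number of pairs $(\ell,c)$ landing $f$ in each of the three weight classes (as dictated by the dichotomy above and by the value of $Q(a)+c$), one sums the contributions at weight $2^{m-1}+2^{m-1-h}$ and separately at $2^{m-1}-2^{m-1-h}$. Although the two Arf types at a fixed $h$ contribute asymmetrically to the two signs, folding in the free choices of $c$ makes the two totals coincide, yielding $A_{2^{m-1}\pm 2^{m-1-h}}=2^{h(h+1)}\frac{(2^m-1)(2^{m-1}-1)\cdots(2^{m-2h+1}-1)}{(2^{2h}-1)(2^{2h-2}-1)\cdots(2^2-1)}$ (the empty products at $h=0$ recovering $A_0=A_{2^m}=1$). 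Subtracting all of these, together with $A_0+A_{2^m}=2$, from the total $2^{\dim\mathrm{RM}(2,m)}=2^{1+m+\binom m2}$ leaves exactly the weight-$2^{m-1}$ codewords, which gives the stated formula for $A_{2^{m-1}}$.

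I expect the main obstacle to be the bookkeeping in the last step: counting quadratic forms by rank \emph{and} Arf invariant, and then verifying that once the linear and constant shifts are incorporated the two families recombine into the single symmetric closed form rather than two distinct expressions for the $+$ and $-$ weights. A secondary point requiring care is the translation step, which relies precisely on the nondegeneracy of $B_Q$ on the complement $U$ of its radical. Everything else — the identification of codewords with degree-$\le 2$ Boolean functions, the balanced-function argument on cosets of $R$, and the two standard weights of nondegenerate forms — is routine and can be cited from the quadratic-forms machinery in MacWilliams--Sloane, Ch.~15.
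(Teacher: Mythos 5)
Your outline is correct and is essentially the classical argument of MacWilliams--Sloane, Ch.~15 (reduction to quadratic Boolean functions, the rank/radical dichotomy for the associated alternating form, the two Arf types with weights $2^{2h-1}\pm 2^{h-1}$, and the symplectic-form count); the paper itself does not prove this theorem but imports it verbatim from that reference, so there is no independent argument to compare against. The bookkeeping you defer does close as you anticipate: for each alternating form of rank $2h$ exactly $2^{2h}$ of the $2^{m+1}$ affine completions $f=Q+\ell+c$ land in each non-balanced weight class, and multiplying by the $2^{h(h-1)}\cdot\frac{(2^m-1)\cdots(2^{m-2h+1}-1)}{(2^{2h}-1)\cdots(2^2-1)}$ forms of that rank yields the stated $2^{h(h+1)}$ prefactor, with the $+$ and $-$ classes equinumerous because $f\mapsto f+1$ exchanges them.
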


We use the technique of Sec.~(\ref{Appendix_thres}) to calculate the thresholds. We find the thresholds for $r=2$ with the same transversal properties worse than the case $r=1$. 
Given the threshold calculations for $\geq 3$ are too prolix and the block sizes too large, we choose $r=1.$ 

Transversality of QRM($1$,$m$) is based on the fact that all the codewords of $\overline{RM}$ are divisible by $\Delta=2^{m-1}$, while their complement is divisible by $\Delta=2^{m-1}-1$ or $\Delta=2^{m}-1$. 
Transversality enables different operations on each logical computational basis state modulo $2^{m-1}$, by applying transversal gates on the $2^{m}-1$ physical qubits. In particular applying transversal $T_n$ on QRM($1,n+1)$ will apply the logical $T_n^{\dagger}$ gate. For example QRM($1,4)$ is transversal for $T_3=T$, also known as the phase-$\pi/8$ gate, but not for smaller fractions of rotations around the $Z$-axis.

% QRMCs can only operate as error detecting code in the presence of full-rank noise and a universal set of operations. This is because a Pauli $X$ noise operator after a $T_n$ logical gate is indistinguishable from an $X$ noise operator before the logical gate by the $Z$ stabilizers, but the latter \TK{when moved after the logical gate becomes a} $T_n X T^{\dagger}_n$ operator which corrupts the logical space.
%REMOVED from version v5 because of N&C Theorem 10.2

\section{Mixed radix extension of RG estimator}
\label{appendix_ji}

The RG estimator~\cite{rudolph2003quantum} only converges if the phase $\phi$ lies in certain regions as detailed in Sec.~(\ref{AppendixRG}).
This limitation can be overcome by slightly modifying the estimator~\cite{ji2008parameter}. It beings by expressing the parameter in a mixed base as 
\be
\phi=v_1 \frac{\pi}{r_1} + v_2 \frac{\pi}{r_1 r_2} + v_3 \frac{\pi}{r_1 r_2 r_3} \ldots,
\ee
where $r_i \in \{2,3\}$.
In order to estimate dit $j$ the qubit $\ket{+}$ state interrogates the field an appropriate number of times followed by a Pauli $X$ measurement. The protocol is identical to that depicted in Fig.~(\ref{fig:protocols}), only with a different number of consecutive interrogations.
Unlike Protocol Ia, Protocol II below converges for all values of $\phi$, since there is no excluded region (Fig.~(\ref{fig:estimator2})).

\begin{figure}[t!]
		\includegraphics[scale=0.2]{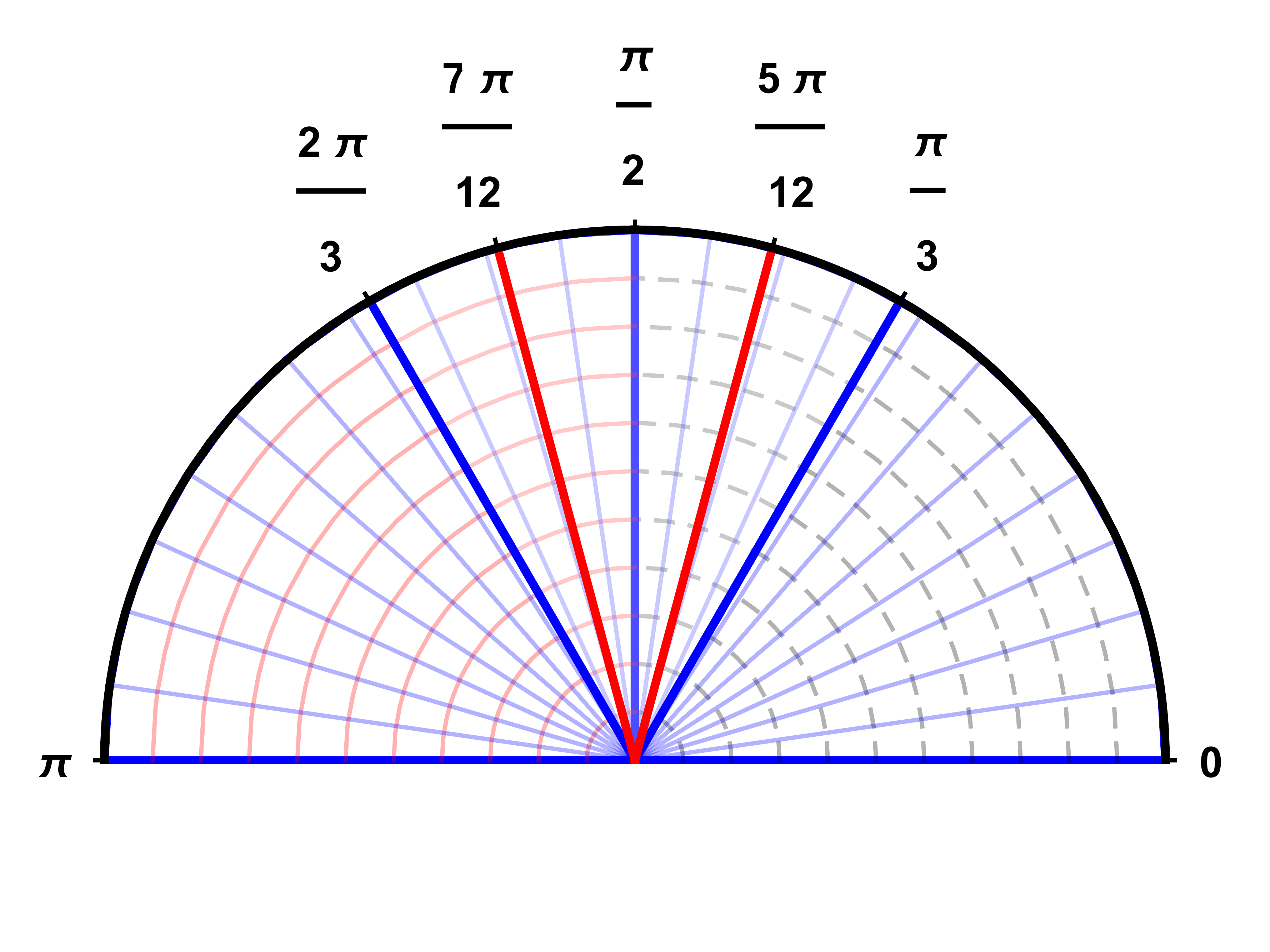}
		\caption{Estimator of Protocol II. There are three regions: $[0,\pi/2]$, $[\pi/3,2\pi/3]$ and $[\pi/2,\pi]$. The decision boundaries for $j=1$ are the red lines. There are no excluded regions.}\label{fig:estimator2}
\end{figure}

\hrulefill

\textbf{Protocol II -- Extended RG estimator~\cite{ji2008parameter}}

For $j=1, \ldots, t$

\begin{enumerate}
\item Repeat $M$ times\\
	(i) Prepare $\ket{+}$. \\
 	(ii) Interrogate field $\prod_{l=0}^{j-1} r_l$ times ($r_0 =1$).\\
	(iii) Measure $X$.
\item Calculate $\widehat{p}_j$ as the fraction of the $+1$ measurement outcomes out of $M$. 
If $\widehat{v}_{j-1}=0$ set $\widehat{\phi}_j = \cos^{-1} (2 \widehat{p}_j -1)$ in $[0,\pi]$ , or else in $[\pi,2\pi]$. If\\
(i) $\widehat{v}_{j-1} \pi \leq \widehat{\phi}_j <  \widehat{v}_{j-1} \pi + \frac{5\pi}{12} $, set $\widehat{v}_j=0$ and $r_j=2$.\\
(ii) $\widehat{v}_{j-1} \pi + \frac{5\pi}{12}  \leq \widehat{\phi}_j < \widehat{v}_{j-1} \pi + \frac{7\pi}{12} $, set $\widehat{v}_j=1$ and $r_j=3$.\\
(iii) $\widehat{v}_{j-1} \pi + \frac{7\pi}{12} \leq \widehat{\phi}_j \leq \widehat{v}_{j-1} \pi + \pi$, set $\widehat{v}_j=1$ and $r_j=2$.

\item If $j \neq t$ add $1$ to $j$ and go to step $1$, otherwise exit and output
 $$\widehat{\phi}= \widehat{v}_1 \frac{\pi}{r_1} + \widehat{v}_2 \frac{\pi}{r_1 r_2} \ldots$$
\end{enumerate}

\vskip-0.12in

\hrulefill

The convergence of the noiseless protocol is proven in \cite{ji2008parameter}. Here we discuss its noise resilience following the analysis for the noisy Protocol Ia in Sec.~(\ref{AppendixRG}). 

Following Protocol Ia, $\gamma$ is the maximum error allowed in the estimated angle for the protocol to converge. In Protocol II, $\gamma$ is fixed to $\pi/12$ since an estimation within this error means that if\\
 (i) $0\leq \phi_j - v_{j-1} \pi < \pi/2 \Rightarrow (\widehat{v}_{j} = v_j = 0|r_j=2) $;\\
 (ii) $\pi/3 \leq \phi_j - v_{j-1} \pi < 2\pi/3 \Rightarrow (\widehat{v}_{j} = v_j = 1|r_j=3)$;\\
 (iii) $\pi/2 \leq \phi_j - v_{j-1} \pi < \pi \Rightarrow (\widehat{v}_{j} = v_j = 1|r_j=2)$.

The associated maximum error in the estimated probability is $\delta = \left|\cos^2 \left( \frac{5\pi}{24} \right) - \cos^2 \left( \frac{6\pi}{24} \right) \right| \approx 0.129. $
The noise thresholds are given by the solutions to
\be
1 - (1-p_{\text{th}})^{3^{t-1}} = \delta
\ee
since $1 - (1-p)^{\prod_l^{t-1} r_l} \leq 1- (1-p)^{3^{t-1}}$.
The number of field interrogations, our resource, to estimate $t$ dits with error $\epsilon$ is
\be
N = \sum_{j=1}^t \prod_l^{j-1} r_l \frac{1}{2(\delta - p_f)^2} \ln \left(\frac{2t}{\epsilon}\right).
\ee
A FT protocol for this estimator on the lines of Protocol Ib using QRMCs suffers from non-transversality for phases such as $\phi=\pi/3.$ A logical shift in such a phase pushes logical angles $\phi'_j$ outside $[0,\pi]$ because $3$ times the logical rotation corresponding to transversal $\pi/3$ does not equal $\pi$. This induces an error in the estimation.

A convergent FT protocol is therefore impossible if we restrict ourselves to codes transversal for rotations $\pi/2^k$ unless we can interrogate the field for a fractional amount of time depending on $j$ and the corresponding logical phase shift given by Lemma~(\ref{lemma_disc}).

\onecolumngrid

%\pagebreak

\section{Graphs for different values of $\gamma$}
\label{appendix:extra_graphs}

\begin{figure*}[h]	
	\begin{subfigure}[h!]{0.31\textwidth}
		\includegraphics[scale=0.44]{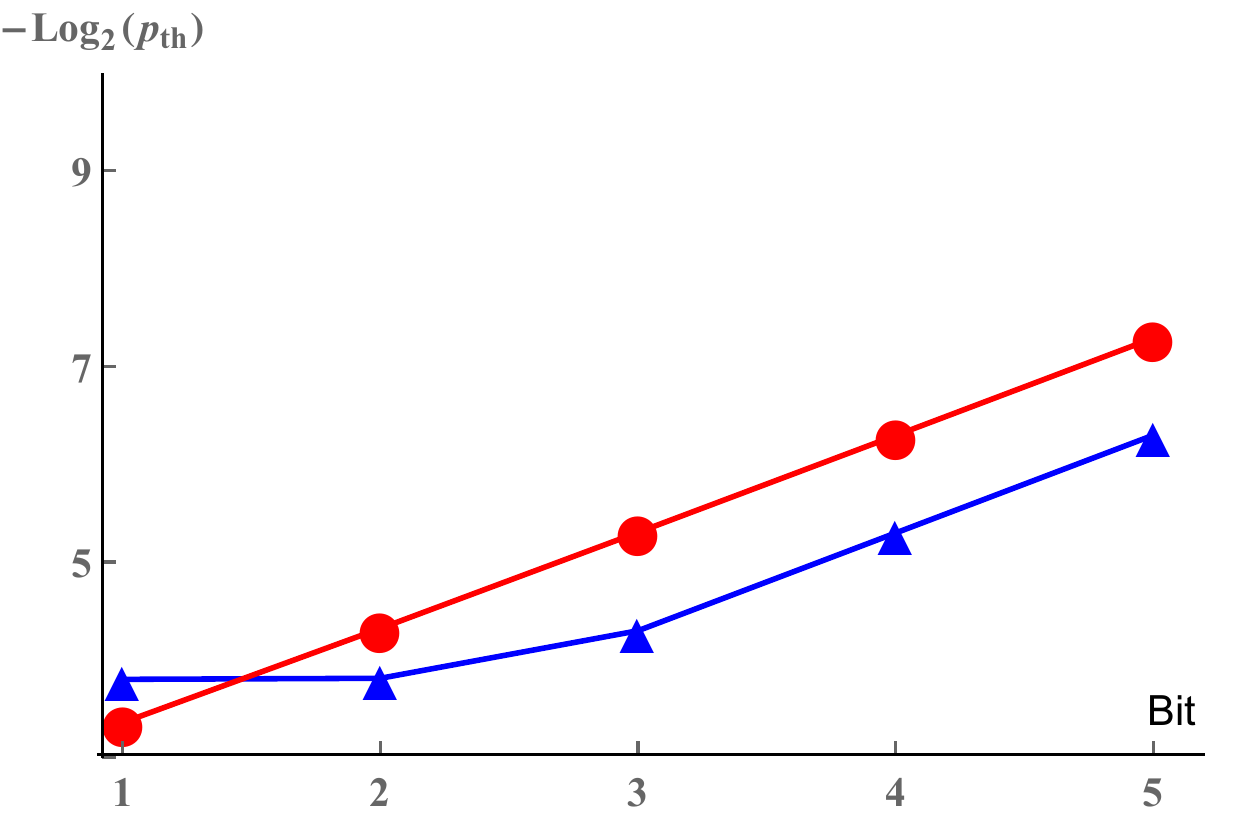}
		\caption{$\gamma=\pi/16$}\label{fig:threshhh1}
	\end{subfigure}
	\quad 
	\begin{subfigure}[h!]{0.31\textwidth}
		\includegraphics[scale=0.44]{threshold_main_paper}
		\caption{$\gamma=\pi/32$}\label{fig:threshhh2}
	\end{subfigure}
		\quad 
	\begin{subfigure}[h!]{0.31\textwidth}
		\includegraphics[scale=0.44]{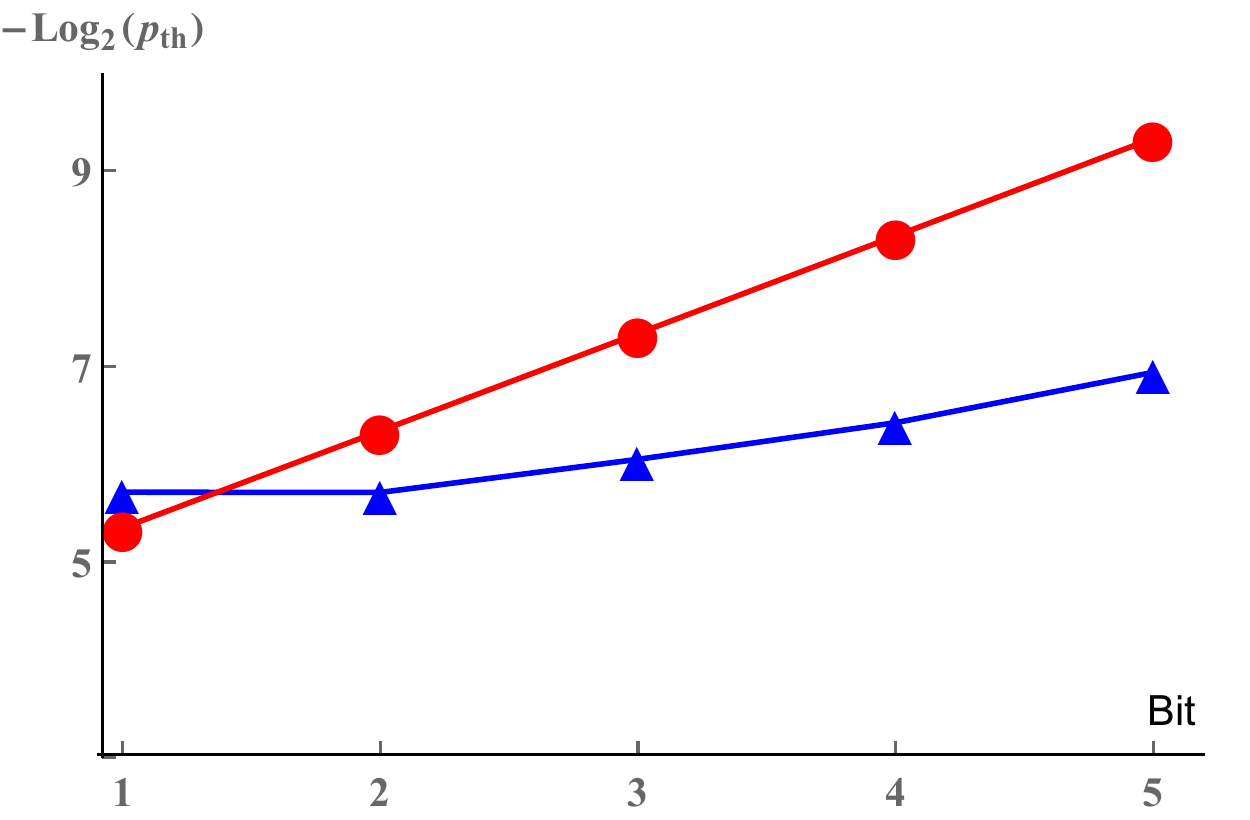}
		\caption{$\gamma=\pi/64$}\label{fig:threshhh3}
	\end{subfigure}
	\caption{Interrogation noise thresholds. Red: Protocol Ia; Blue: Protocol Ib without device noise. For the effect of device noise see Fig.~(\ref{fig:threshold_relations}).}
\label{fig:threshhh}
\end{figure*}

\begin{figure*}	
	\begin{subfigure}[h!]{0.31\textwidth}
		\includegraphics[scale=0.44]{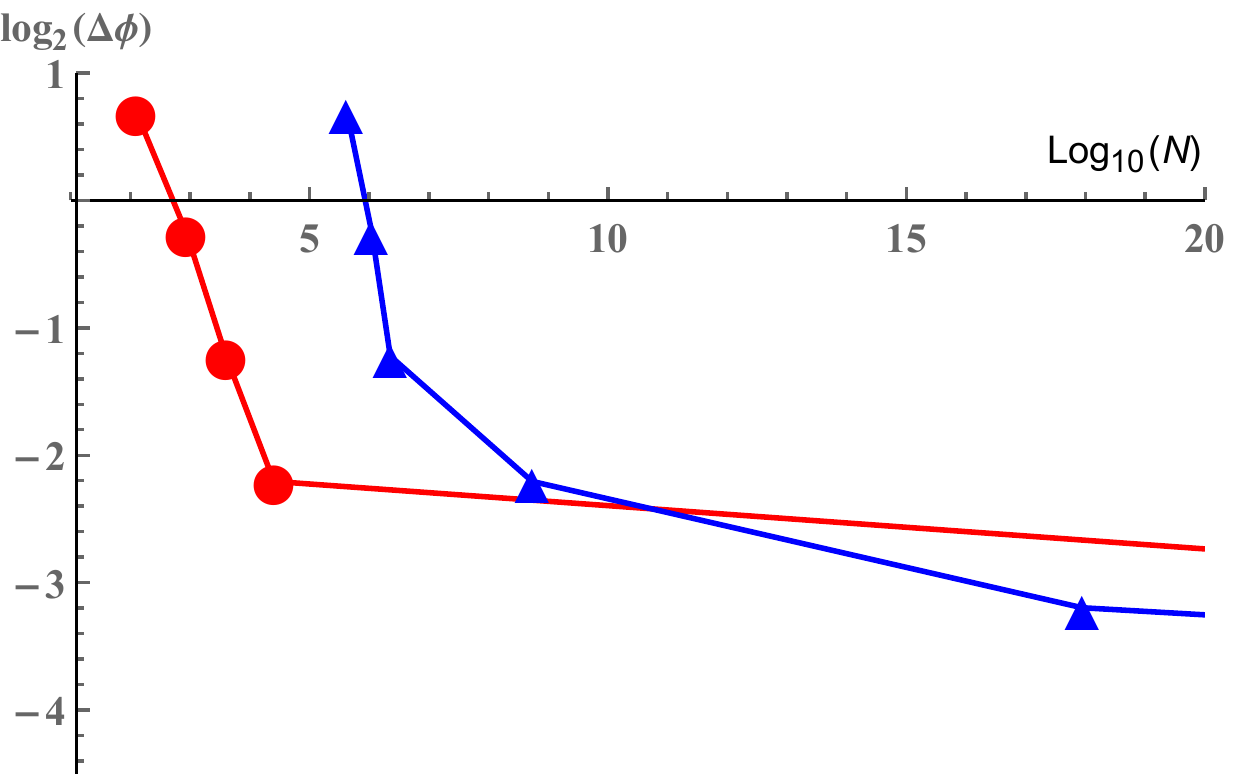}
		\caption{$\gamma=\pi/16$}\label{fig:varresourcc1}
	\end{subfigure}
	\quad 
	\begin{subfigure}[h!]{0.31\textwidth}
		\includegraphics[scale=0.44]{resources_main_paper_}
		\caption{$\gamma=\pi/32$}\label{fig:varresourcc2}
	\end{subfigure}
		\quad 
	\begin{subfigure}[h!]{0.31\textwidth}
		\includegraphics[scale=0.44]{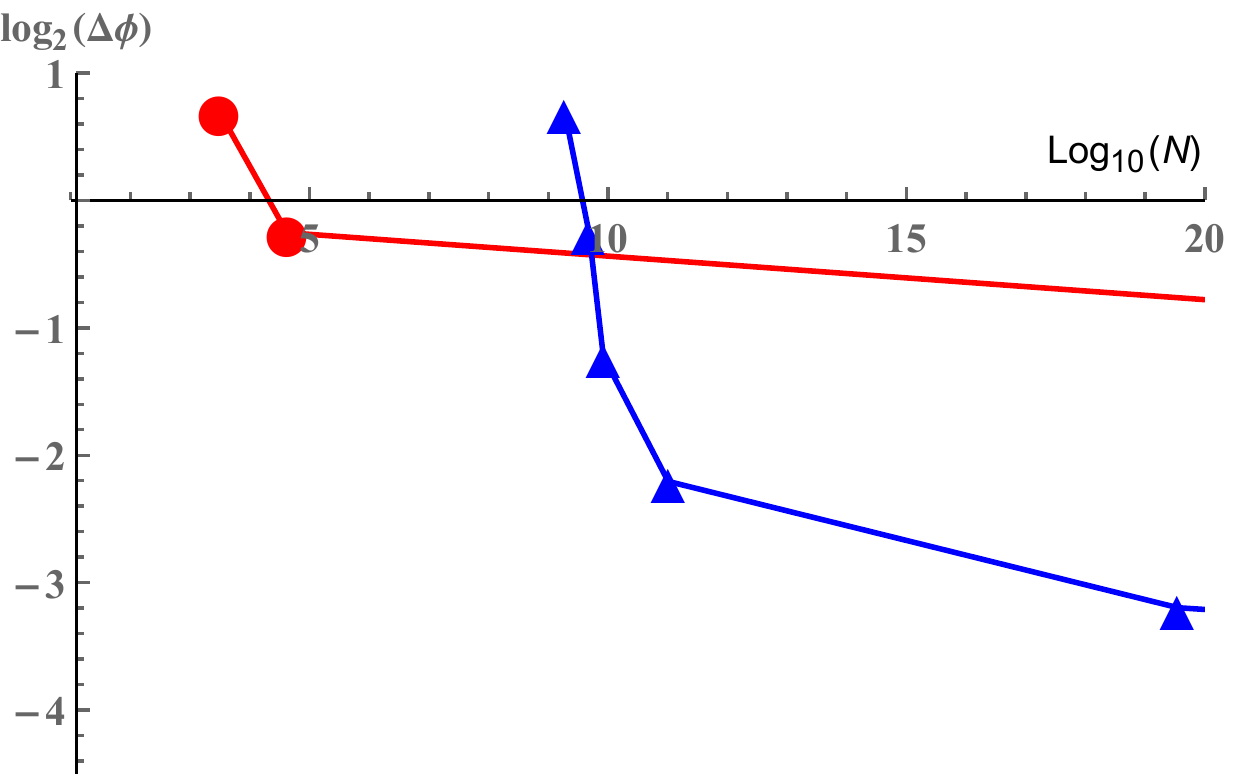}
		\caption{$\gamma=\pi/64$}\label{fig:varresourcc3}
	\end{subfigure}
	\caption{Precision as a function of number of interrogations, our resource. Red: Protocol Ia. Blue: Protocol Ib without device noise, and markers denote bits. Improvement from fault tolerance is illustrated in estimating higher bits. Noise chosen at the noise thresholds of Protocol Ia which are closer to $0.63\%$: (a) $p=0.639\%$, (b) $p=0.626\%$ and (c) $p=0.619\%$.}
\label{fig:varresourcc}
\end{figure*}

\twocolumngrid

% Create the reference section using BibTeX:
\bibliography{MetrologyReferences}

\end{document}